\documentclass[aps,prx,reprint,superscriptaddress,nofootinbib]{revtex4-2}
\usepackage{graphicx}
\usepackage{xcolor}
\usepackage{caption}
\usepackage{subcaption}
\usepackage{amsmath}
\usepackage{amsthm}
\usepackage{xcolor}
\usepackage{amsfonts}
\usepackage{hyperref}
\usepackage{physics}
\usepackage{comment}
\usepackage{amssymb}
\usepackage{mathrsfs}
\usepackage{pifont}
\usepackage{bbold}
\usepackage{float}
\newtheorem{thm}{Theorem}

\newtheorem{lem}{Lemma}
\newtheorem{defn}{Definition}
\usepackage{multirow}
\usepackage[most]{tcolorbox} 
\newtheorem{protocol}{Protocol}
\def\be{\begin{equation}}
\def\ee{\end{equation}}
\def\bea{\begin{eqnarray}}
\def\eea{\end{eqnarray}}
\def\ben{\begin{equation*}}
\def\een{\end{equation*}}
\def\bean{\begin{eqnarray*}}
\def\eean{\end{eqnarray*}}
\def\bma{\begin{mathletters}}
\def\ema{\end{mathletters}}
\def\bi{\begin{itemize}}
\def\ei{\end{itemize}}
\def\bd{\begin{description}}
\def\ed{\end{description}}

\newcommand{\proj}[1]{\ket{#1}\bra{#1}}

\begin{document}

\title{Zero Knowledge Proofs in Quantum Networks}
\author{Tuhin Paul}
\email{tuhin.paul96@gmail.com}
\affiliation{Physics and Applied Mathematics Unit, Indian Statistical Institute, Kolkata 700108, India}
\author{Srijani Das}
\email{dassrijani33@gmail.com}
\email{srijani24_r@isical.ac.in}
\affiliation{Physics and Applied Mathematics Unit, Indian Statistical Institute, Kolkata 700108, India}
\author{Manasi Patra}
\email{manasipatra58@gmail.com}
\affiliation{Physics and Applied Mathematics Unit, Indian Statistical Institute, Kolkata 700108, India}

\author{Ramij Rahaman}
\email{ramijrahaman@isical.ac.in}
\affiliation{Physics and Applied Mathematics Unit, Indian Statistical Institute, Kolkata 700108, India}
\begin{abstract}

Zero-knowledge proofs (ZKPs) enable the verification of a statement without revealing any information beyond its validity and constitute a fundamental primitive in cryptography and information theory. However, existing constructions rely on computational assumptions and are predominantly confined to bipartite settings, leaving their information-theoretic realization in bipartite or network scenarios largely unexplored. Here we develop a framework for zero-knowledge verification based on the indistinguishability of quantum states under operational constraints. Exploiting the fundamental limitations imposed by local operations, we show that a verifier is inherently restricted from extracting information about the underlying state while retaining the ability to verify correctness. We construct explicit protocols for multiparty quantum networks that achieve information-theoretic security, ensuring that no subset of collaborating parties can gain knowledge beyond the validity of the statement, independent of their joint computational power. 
\end{abstract}
\maketitle

\section{Introduction}
The ability to verify the validity of a statement without revealing any additional information about the underlying proof constitutes one of the most profound and counterintuitive primitives in modern cryptography and information theory. Such protocols, known as zero-knowledge proofs (ZKPs), provide a rigorous framework to achieve this task, allowing a prover to convince a verifier of the validity of a statement while ensuring that no additional information about the proof itself is disclosed. Since their introduction by Goldwasser, Micali, and Rackoff \cite{Goldwasser1985Knowledge}, ZKPs have become indispensable tools for secure authentication, privacy-preserving verification, and blockchain protocols \cite{GMR89, Blckchn1,blckchn2,BenSasson2014,Bulletproofs2018,Narayanan2016}.\\ 
 Formally, a zero-knowledge proof protocol is typically formulated through an interactive exchange between two parties, a computationally unbounded prover and a polynomial-time verifier. Such protocols must satisfy three fundamental conditions: (i) completeness, which ensures that an honest prover can convince an honest verifier of a true statement with high probability; (ii) soundness, which guarantees that no dishonest prover can convince the verifier of a false statement except with negligible probability; and (iii) zero-knowledge, which requires that the verifier learns nothing beyond the validity of the statement.\\
 In the classical setting, the security of zero-knowledge protocols is inherently tied to computational assumptions. Most known constructions rely on the presumed hardness of certain mathematical problems, such as factorization \cite{RSA77} or discrete logarithms \cite{DH76}, and therefore provide only computational security. As a consequence, their security can be compromised by advances in algorithms or computational models. In particular, the advent of quantum computation challenges the foundations of classical cryptographic security, as quantum algorithms \cite{SHOR97,Grover97} can efficiently solve certain problems believed to be classically hard \cite{JOHNSON90,AroraBarak2009,GareyJohnson1979,Goldreich2008}.
 
 In 2006, Watrous~\cite{Watrous08} demonstrated that several canonical interactive proof systems, including the Goldreich-Micali-Wigderson protocols for graph isomorphism and graph 3-coloring, remain zero-knowledge against quantum verifiers. By constructing a quantum interactive proof system for a complete problem, he further established the equivalence of honest-verifier and general quantum statistical zero-knowledge, \(\mathrm{QSZK}_{\mathrm{HV}}=\mathrm{QSZK}\). Nevertheless, these protocols lack device independence, precluding information-theoretic security against uncharacterized devices. Moreover, while zero-knowledge protocols are well understood in the bipartite regime, their extension to genuine multipartite network architectures remains largely uncharted, particularly under information-theoretic constraints.
 
 These limitations motivate the development of information-theoretically secure protocols whose security is independent of any assumptions about the verifier’s computational power, and naturally lead to the study of zero-knowledge protocols in the quantum regime. In this work, we develop a zero-knowledge framework based on the indistinguishability of quantum states \cite{walgate2000,kar01,walgate02}, which fundamentally limits the information accessible to the verifier while retaining reliable verification of the claimed statement.We further construct explicit zero-knowledge protocols for multiparty quantum-network settings \cite{Kimble2008,Wehner2018}. Our constructions achieve information-theoretic security, ensuring that no subset of collaborating parties can extract any knowledge beyond the validity of the proven statement, independent of their computational power. This establishes, to our knowledge, the first rigorous framework for information-theoretically secure zero-knowledge proofs in multiparty quantum scenarios, thereby substantially extending the scope of zero-knowledge cryptography into the domain of distributed quantum information processing. 

A fundamental question is whether the identity of a shared entangled resource can be certified without being revealed. We address this question by introducing zero-knowledge certification of shared entanglement, considering Bell states \cite{Bell64,CHSH} and GHZ states \cite{GHZ} as paradigmatic resources. We first construct a zero-knowledge protocol for Bell-state certification and identify a correlation loophole that compromises its verification. We then introduce a two-basis certification procedure that closes this loophole, but show that it remains vulnerable to higher-dimensional realizations reproducing the target correlations without implementing the intended qubit-level structure. This observation reveals that secure certification must enforce both the incompatibility of the measurement observables and the effective two-dimensionality of the underlying systems, motivating a device-independent formulation. The device-independent (DI) framework is central to quantum-protocol security, enabling certification of shared correlations solely from observed input-output statistics, without assumptions about the internal workings or trustworthiness of the measurement devices. 
We extend the construction to multipartite quantum networks, where observed nonlocal correlations enable the self-testing of the shared GHZ state and the corresponding measurement observables, yielding a device-independent zero-knowledge protocol. We establish its zero-knowledge property information-theoretically by constructing a simulator whose induced verifier view is identical or, more generally, indistinguishable from that of the real protocol, without access to the prover's secret state. Finally, we show that these guarantees persist under noise, demonstrating the robustness of the protocols to imperfections in the shared quantum resources.

\section{Entanglement Certification based ZKP}
Entanglement constitutes a fundamental resource for a broad range of quantum information and quantum computation tasks. However, the mere presence of entanglement does not, by itself, guarantee the successful implementation of a given quantum protocol. Identifying and certifying the relevant structure of the shared entangled resource is equally crucial, since the operational utility of the resource depends not only on the presence of entanglement but also on the identity of the shared state. This is particularly evident in Bell tests, quantum teleportation, quantum key distribution, superdense coding, quantum random numbers generation, etc., where the performance and, in most of the cases, the security of the protocol depends critically on the specific Bell state shared by the parties.

Notably, knowledge of the identity of a shared Bell state by one party is sufficient to transform it into any of the four Bell states through an appropriate local unitary operation. This observation motivates a fundamental question: Can a prover, holding one subsystem of a shared Bell pair and knowing its identity, convince a verifier, holding the other subsystem, that the prover possesses this knowledge without revealing the identity of the shared state? This question naturally leads to the problem of zero-knowledge certification of shared entanglement, wherein the prover demonstrates knowledge of the shared entangled state while keeping its identity concealed. To formalize this problem, we begin with a simple two-party task, described by the following protocol.

\subsection{Bell-State Certification ZKP}\label{sec:p1}
\begin{protocol}{Single-Basis Bell-State Certification} \label{pro_1}

\noindent\textbf{Protocol Setting:} Let $P$ (Prover) and $V$ (Verifier) be two spatially separated parties who share \(N\) bipartite quantum systems arranged in an ordered sequence. Each shared pair is promised to be in one of the two Bell states. 
\begin{align*}
 \ket{\phi^{-}}=\frac{1}{\sqrt{2}}(\ket{00}-\ket{11})\\
 \ket{\psi^{+}}=\frac{1}{\sqrt{2}}(\ket{01}+\ket{10})
\end{align*}

The prover \(P\) possesses complete knowledge of the identity of the Bell state at each position \(i\in[N]\), whereas the verifier \(V\) knows only that each shared pair belongs to the set $\{\ket{\phi^-},\ket{\psi^+}\}$. Here and throughout, we use the notation $[N]:=\{1,2,\ldots,N\}$ for any natural number $N$.

\noindent\textbf{Goal:} The objective of the protocol is to enable \(P\) to convince \(V\) that \(P\) knows the identity of every shared Bell state, while revealing no information about the identities themselves beyond the fact that they belong to the prescribed set.\end{protocol}

\noindent\textbf{Protocol:}
A single round of the protocol is as follows:
\begin{enumerate}
\item \textbf{Challenge:} The verifier \(V\) selects an index \(i\in[N]\) uniformly at random and sends \(i\) to the prover \(P\).
\item \textbf{Prover's response:} Upon receiving \(i\), the prover \(P\) measures his subsystem of the \(i\)-th Bell pair in the Pauli-\(Z\) basis, obtaining an outcome \(m_P\in\{+1,-1\}\). Hereafter, we adopt the encoding $\pm 1 \mapsto \{0,1\}$ for measurement outcomes. Using this outcome together with his knowledge of the identity of the shared Bell state, \(P\) computes
 \[
 a_P=
 \begin{cases}
 m_P, & \text{for } \ket{\phi^-},\\[2mm]
 m_P\oplus1, & \text{for } \ket{\psi^+},
 \end{cases}
 \]
 and sends \(a_P\) to the verifier \(V\).
 
\item \textbf{Verification:} The verifier \(V\) measures his subsystem of the same Bell pair in the Pauli-\(Z\) basis, obtaining an outcome \(m_V\in\{0,1\}\). The verifier accepts if and only if \(
 a_P=m_V
 \).
\end{enumerate}

\noindent\textbf{ZKP Properties:} We assess the protocol according to the three standard properties of a zero-knowledge proof: completeness, soundness, and zero-knowledge.

\begin{itemize}
\item \textbf{Completeness:} For an honest prover who possesses the correct identity of the shared Bell state, the perfect correlations of the Bell pair ensure that the prover's response \(a_P\) agrees with the verifier's measurement outcome \(m_V\) with certainty. Consequently,
\begin{equation*}
 \Pr[\mathrm{accept}\mid P\ \mathrm{honest}]=1.
\end{equation*}
 
\item \textbf{Soundness:}
Suppose that a dishonest prover does not know whether the shared state is $\ket{\phi^-}$ or $\ket{\psi^+}$. From the prover's perspective, the reduced density matrix \[\rho_P=\rm{Tr}_V[\proj{\phi^-}=\dfrac{\mathbb{I}}{2}=\rm{Tr}_V[\proj{\psi^+}]\] are same and maximally mixed. 
Thus, without knowing the identity of the shared Bell state, the prover cannot predict the verifier's outcome better than random guessing. The maximum success probability in a single round is therefore $1/2$. 
For $r$ independent repetitions,
\begin{equation*}
\Pr(\text{accept})\leq 2^{-r}.
\end{equation*}
Hence, the soundness error decreases exponentially with the number of rounds.

\item \textbf{Zero-Knowledge}: The prover reveals only the predicted outcome $a_P$ and never discloses the underlying measurement outcome $m_P$. Moreover, for both $\ket{\phi^-}$ and $\ket{\psi^+}$, the verifier's reduced state is maximally mixed, \(\rho_V = \frac{I}{2}
\), and is therefore independent of the Bell-state identity. Hence, the verifier's local quantum state, together with the classical transcript received from the prover, carries no information about the identity of the shared Bell state beyond the prescribed prior knowledge. In particular, any strategy that enabled the verifier to distinguish $\ket{\phi^-}$ and $\ket{\psi^+}$ using only local operations and the protocol transcript would violate the no-signaling principle. Thus, the protocol is zero knowledge with respect to the hidden Bell-state identity. A rigorous simulator construction establishing the indistinguishability of the real and simulated transcripts is provided in section \ref{sec:simulator}. \end{itemize}

\noindent\textbf{Complexity:} Each protocol round requires only constant-time local computation by both parties, consisting of a single projective measurement followed by constant-time classical post-processing. The classical communication cost per round is $\lceil \log_2 N \rceil + 1$ bits, corresponding to the verifier's challenge index $i$ and the prover's one-bit response. Thus, the communication complexity of a single round is $O(\log_2 N)$. Repeating the protocol for $r$ rounds requires $O(r)$ local measurements steps and $O(r\log N)$ classical communication.\\

\noindent\textbf{Loophole in the Verification:} Although the protocol assumes that each shared pair is guaranteed to be in either $\ket{\phi^-}$ or $\ket{\psi^+}$, the $Z$-basis verification test does not certify this promise. Indeed, consider the separable state
\[
\rho_{PV}=\frac{1}{2}\left(\ket{00}\bra{00}+\ket{11}\bra{11}\right).
\]
It exhibits the same perfect $Z$-basis correlations required for acceptance: whenever $P$ obtains outcome $m_P$, the verifier obtains the same outcome. Thus, the prover can choose $a_P=m_P$ and pass the test certainty, despite possessing no information establishing the promised Bell-state structure. Hence, the verification test certifies only the observed $Z$-basis correlation and not the underlying state assumption, leaving a fundamental loophole in the soundness of the protocol.\\

\noindent\textbf{Closing the Correlation Loophole:}

For the separable state $\rho_{PV}$, the observed correlations are basis dependent. In particular, consider local measurements, for both $P$ and $V$, in a basis other than the computational basis,
\begin{align*}
\ket{a_0} &= \cos\theta\ket{0}+e^{i\gamma}\sin\theta\ket{1},\\
 \ket{a_1} &= \sin\theta\ket{0} - e^{i\gamma}\cos\theta\ket{1},
\end{align*}
 In general, $\rho_{PV}$ does not retain perfect correlations under such a basis change and is therefore distinguishable from the intended Bell-state correlations by a suitable choice of measurement setting.

This observation motivates a minimal modification of the verification procedure. Rather than fixing a single measurement basis, the verifier randomly selects between two incompatible observables, e.g., $Z$ and $X$. Although $\rho_{PV}$ reproduces the required perfect $Z$-basis correlation through classical correlations, it cannot simultaneously reproduce the corresponding perfect correlation in the complementary $X$ basis. Hence, testing both settings rules out the above separable-state strategy and strengthens the soundness of the protocol. More generally, a single-basis test certifies only the observed correlation, whereas complementary-basis tests constrain the underlying quantum state responsible for it.
 
\begin{protocol}{Two-Basis Bell-State Certification} \label{pro_2}

We now modify Protocol \ref{pro_1} to eliminate the single-basis correlation loophole described above. 

\noindent\textbf{Protocol setting}: The parties share $N$ bipartite systems, each promised to be in one of the two Bell states $\ket{\phi^-}$ or $\ket{\psi^+}$. The prover $P$ knows the identity of each shared state, whereas the verifier $V$ knows only the promised ensemble.

\noindent\textbf{Goal.} The prover convinces the verifier that he possesses the promised state information without revealing the identities of the shared Bell states.\end{protocol}

\noindent\textbf{Protocol.} In each round, the verifier and prover proceed as follows:
\begin{enumerate}
\item \textbf{Challenge.} The verifier samples $i\in[N]$ uniformly at random and independently chooses a measurement basis $b\in{Z,X}$ uniformly at random. He sends $(i,b)$ to $P$.

\item \textbf{Prover's response.} The prover measures his subsystem of the $i$-th bipartite state in basis $b$, obtaining $m_P\in\{0,1\}$. Using his knowledge of the Bell-state identity, he computes \begin{equation*}\label{ap_mp_2}
 a_P=\begin{cases}
 m_P&\text{for }\ket{\phi^-} \text{ and }b=Z\\
 m_P&\text{for }\ket{\psi^+} \text{ and }b=X\\
 m_P\oplus 1&\text{for }\ket{\phi^-} \text{ and }b=X\\
 m_P\oplus 1&\text{for }\ket{\psi^+} \text{ and }b=Z \end{cases}
\end{equation*}

and sends \(a_P\) to the verifier \(V\).

\item \textbf{Verification.} The verifier measures his subsystem of the same state in the basis $b$, obtaining $m_V\in\{0,1\}$, and accepts if and only if
\(a_P=m_V\). 
\end{enumerate}

\noindent\textbf{ZKP Properties}: For either choice \(b\in{Z,X}\), the two possible states require opposite correlation predictions: \(\ket{\phi^-}\) exhibits correlation in the \(Z\) basis and anticorrelation in the \(X\) basis, whereas \(\ket{\psi^+}\) exhibits the complementary pattern. Moreover, the reduced states on the prover's subsystem are identical, \(\rho_P=\frac{I}{2}\) for both the state, so no local measurement performed by a dishonest prover, who does not know the state identity, can reveal the identity of the shared state. Consequently, in the absence of additional information, the optimal strategy is to guess the required correlation, yielding
\[
\Pr(\text{accept})\leq \frac{1}{2}.
\]
Thus, the completeness, soundness, and zero-knowledge arguments remain unchanged from Protocol \ref{pro_1}.\\

\textbf{Complexity:} Each round involves one local projective measurement and constant-time classical post-processing. The communication cost is $\lceil\log_2 N\rceil+2=O(\log N)$ bits per round, accounting for the index, basis choice, and one-bit response. Thus, over $r$ rounds, the protocol requires $O(r)$ local operations and $O(r\log N)$ classical communication, with $r$ shared Bell-state uses. This complexity is asymptotically identical to that of Protocol \ref{pro_1}, differing only in constant communication overhead.\\
The two-basis test also eliminates the single-basis loophole of Protocol \ref{pro_1}. In particular, the separable state
\[
\rho_{PV}=\frac{1}{2}\left(\ket{00}\bra{00}+\ket{11}\bra{11}\right)
\]
reproduces the perfect correlations required in the \(Z\) basis but cannot simultaneously reproduce the corresponding anticorrelations in the incompatible \(X\) basis. Hence, successful verification cannot be achieved by reproducing the statistics of a single measurement setting. The use of complementary bases therefore rules out this separable-state strategy and closes the single-basis loophole.

It is important to note that the Protocols \ref{pro_1} and \ref{pro_2} can be straightforwardly generalized to the complete Bell-state ensemble,

\begin{align*}
\ket{\phi^\pm}&=\frac{1}{\sqrt{2}}\left(\ket{00}\pm\ket{11}\right),\\
\ket{\psi^\pm}&=\frac{1}{\sqrt{2}}\left(\ket{01}\pm\ket{10}\right).
\end{align*}
The resulting protocols retain the same essential structure and inherit analogous completeness, soundness, and zero-knowledge properties.\\

\noindent\textbf{LOCC state distinguishability \& ZKP}: State distinguishability and zero-knowledge proofs (ZKP) constitute fundamentally different tasks, although they may be operationally related in some cases. In state distinguishability, two parties share an unknown state drawn from a known ensemble and seek to identify it under restricted operations, such as local operations and classical communication (LOCC). By contrast, in a ZKP, the prover $P$ seeks to convince the verifier $V$ that $P$ knows the identity of the shared state, without revealing any information about that identity.

For example, the four Bell states\(\{\ket{\phi^\pm},\ket{\psi^\pm}\}\) cannot be perfectly distinguished by LOCC \cite{kar01}. Nevertheless, when one of these states is shared between \(P\) and \(V\), Protocol \ref{pro_2} enables P to demonstrate knowledge of its identity to V without revealing the identity itself.\\

\noindent\textbf{Loophole in Protocol \ref{pro_2}}: At first sight, the protocol appears loophole-free, provided that the measurement devices implement genuinely incompatible measurements and that each subsystem is guaranteed to be a qubit. However, in the absence of independent certification of these assumptions, the protocol admits a higher-dimensional separable-state simulation. For example,
\[
\rho_{PV}
=
\frac12
\left(
|\Psi_1\rangle\!\langle\Psi_1|
+
|\Psi_2\rangle\!\langle\Psi_2|
\right),
\]
where
\begin{align*}
|\Psi_1\rangle
&=
|+\rangle_Z^{P_1}|+\rangle_X^{P_2}
|+\rangle_Z^{V_1}|-\rangle_X^{V_2} \text{ and}\\
|\Psi_2\rangle
&=
|-\rangle_Z^{P_1}|-\rangle_X^{P_2}
|-\rangle_Z^{V_1}|+\rangle_X^{V_2}, 
\end{align*}
is a classical mixture of orthogonal product states.\\ Defining the local measurement observables for \(P(V)\) as
\[
Z_{P(V)}=\sigma_z^{P_1(V_1)}\otimes I^{P_2(V_2)}\text{ and }
X_{P(V)}=I^{P_1(V_1)}\otimes\sigma_x^{P_2(V_2)}.
\]
one obtains \(
\langle Z_PZ_V\rangle=1\) and \(
\langle X_PX_V\rangle=-1\),
which exactly reproduce the correlations of the Bell state
\(\ket{\phi^-}\). Thus, by exploiting the indistinguishability of the measurement statistics produced by this separable, higher-dimensional state from those of the target Bell state \(\ket{\phi^-}\), the prover can successfully deceive the verifier. This demonstrates that protocol security requires not only certification of the incompatibility of the measurement observables but also verification of the effective two-dimensionality of the underlying quantum systems. These considerations motivate the device-independent protocol introduced in the following section. 

\begin{protocol}{Device-independent ZKP: Bell-State Certification} \label{pro_3}

\noindent\textbf{Protocol Setting:} 
The prover $P$ and verifier $V$ share $N$ maximally entangled two-qubit states, each of which is promised to be one of the four Bell states, $\{|\phi^\pm\rangle,|\psi^\pm\rangle\}$. The prover knows the identity of each shared state, whereas the verifier has no information about the individual state identities.

\noindent\textbf{Goal:}
The objective is for $P$ to convince $V$ that the selected shared state is a maximally entangled Bell state, while revealing no information about which Bell state it is.
\end{protocol}

\noindent\textbf{Protocol:}
\begin{enumerate}
\item \textbf{Challenge:}
The verifier randomly selects a subset of the shared states and communicates the corresponding choices to the prover.
\item \textbf{State Transformation:}
For each selected state, the prover exploits his knowledge of its identity to apply the corresponding local unitary transformation
\[
U_P=
\begin{cases}
I, \text{if the state is }\ket{\psi^-},\\
\sigma_X, \text{if the state is }\ket{\phi^-},\\
\sigma_Y, \text{if the state is }\ket{\phi^+},\\
\sigma_Z, \text{if the state is }\ket{\psi^+}.
\end{cases}
\]
Thus, irrespective of the initial Bell-state identity, the selected state is mapped to \(\ket{\psi^-}\), up to an irrelevant global phase. This transformation removes the dependence of the subsequent verification procedure on the original state identity. The prover does not reveal any information about the applied transformation \(U_P\) to the verifier.

\item \textbf{CHSH Test:}
 For each selected state, the prover and verifier perform local measurements according to the settings
\[
P_1=\frac{-Z-X}{\sqrt2},\qquad
P_2=\frac{Z-X}{\sqrt2},
\]
for \(P\), and
\[
V_1=X,\qquad V_2=Z
\] for \(V\). Both parties record their measurement settings and corresponding outcomes in each run.
\item \textbf{Verification:}
 For each run, \(P\) sends his measurement setting and outcome to \(V\). The verifier then evaluates the CHSH correlator \[
S=
\langle P_1V_1\rangle+
\langle P_1V_2\rangle+
\langle P_2V_1\rangle-
\langle P_2V_2\rangle
\] from his local measurement data and the information received from \(P\). The protocol is accepted if the observed value is consistent with the maximal quantum violation, namely,
\[
S \ge 2\sqrt2-\epsilon,
\]
where \(\epsilon>0\) is a sufficiently small tolerance accounting for finite-statistics and experimental imperfections.

\end{enumerate}

\noindent\textbf{ZKP Properties:}

\begin{itemize}

\item \textbf{Completeness:} For an honest prover, who knows the identity of each selected state, the prescribed local unitary maps every selected state to $\ket{\psi^-}$ up to an irrelevant global phase. The subsequent CHSH measurements therefore yield
\(S\simeq2\sqrt2\) corresponding to the maximal quantum violation and provide a device-independent self test for $\ket{\psi^-}$ \cite{Yao}. Thus, in the ideal limit, an honest prover is accepted with unit probability, while finite-statistics and experimental imperfections result only in negligible deviations from perfect completeness. The robustness of the scheme against noise is discussed in section \ref{sec:robust}. 
\item \textbf{Soundness:} A prover who does not possess the promised Bell-state information is effectively described by the maximally mixed two-qubit separable state
\[
\rho_{PV}
=\frac{1}{4}\sum_{\beta\in\{\phi^\pm,\psi^\pm\}}
\ket{\beta}\bra{\beta}
=\frac{\mathbb{I}_2^P}{2}\otimes \frac{\mathbb{I}_2^V}{2}.
\]
Being separable, this state cannot violate the CHSH inequality and hence satisfies \(S\leq 2\), the local-realistic bound \cite{Bell64, CHSH}. Consequently, a prover lacking the promised Bell-state information cannot reproduce the correlations required by the prescribed Bell test and, hence, cannot convince the verifier.

We next consider a prover possessing only partial information about the shared Bell state. With probability \(p\), the prover correctly identifies the Bell state, while with probability \(1-p\),he has no information about its identity. The resulting strategy is thus a convex mixture of the informed and uninformed strategies. By convexity of the CHSH expression, the effective CHSH value satisfies
\begin{align*}
 S_{\rm eff}&\leq p S_{\rm inf}+(1-p)S_{\rm uninformed}\\
 &\leq 2(1+p(\sqrt{2}-1)),
\end{align*}

where \(S_{\rm inf}\) and \(S_{\rm uninformed}\) denote the maximal CHSH values attainable with complete and no Bell-state information, respectively. Hence, the deficit from the maximal quantum value $2\sqrt{2}$ is $2\sqrt{2}-S_{\rm eff}\ge 2(\sqrt{2}-1)(1-p)\simeq 0.0828 (1-p)$. Thus, unless $p$ is sufficiently close to unity, the resulting CHSH violation remains appreciably below the Tsirelson bound, preventing the prover from convincing the verifier within a verification test requiring near-maximal CHSH violation.

\item \textbf{Zero Knowledge:}
The prover's state-dependent operation is local and is not revealed to the verifier. Moreover, for every Bell state, the marginal state of the verifier is, 
\[
\rho_V
=
\operatorname{Tr}_P
\!\left(
|\beta\rangle\langle\beta|
\right)
=
\frac{\mathbb{I}}{2},\]
where, \(
|\beta\rangle\in
\{
|\phi^\pm\rangle,
|\psi^\pm\rangle
\}\). Since the prover's local transformation leaves $\rho_V$ invariant, the verifier's reduced state and hence his local statistics are independent of the original Bell-state identity. The CHSH transcript therefore reveals no information about which Bell state was initially shared.\end{itemize}
\textbf{Complexity:} The complexity of the protocol is determined by the number of states $r$ selected for CHSH verification and the number of measurements performed on each selected state. For each selected state, the prover performs one local unitary followed by constant-time classical post-processing. The prover communicates one measurement setting and one-bit outcome per CHSH run, resulting in $O(1)$ classical communication per run. For $r$ independent CHSH runs, the total local computational and measurement costs are $O(r)$, while the prover-to-verifier communication is $O(r)$ bits. Specifying the $r$ randomly selected states requires $O(r\log N)$ bits. Hence, the total communication complexity is $O(r\log N)$, while the verification complexity is $O(r)$. To obtain a CHSH violation satisfying $S\geq 2\sqrt{2}-\epsilon$, the protocol requires $r=O(1/\epsilon^2)$ samples, yielding $O(1/\epsilon^2)$ local measurements and computational cost, $O(1/\epsilon^2)$ prover-to-verifier communication, and $O(\log N/\epsilon^2)$ total classical communication.

We now extend the entangled-state certification ZKP to a multipartite network. Specifically, a prover $P_1$ seeks to convince $n-1$ verifiers, $V_2,V_3,\dots,V_{n}$ of his knowledge of a randomly selected sequence of shared $n$-qubit GHZ states, $\displaystyle\ket{\phi_n^\pm}=\dfrac{1}{\sqrt{2}}\left(\ket{0}^{\otimes n}\pm \ket{1}^{\otimes n}\right)$. 

\subsection{GHZ certification ZKP}
\begin{protocol}{ZKP: Multipartite State Certification} \label{pro_m}

\noindent\textbf{Protocol Setting:}
Consider $n$ spatially separated parties, comprising a prover $P_1$ and $n-1$ verifiers $V_2,V_3,\dots, V_{n}$, who share $N$ copies of an $n$-qubit state, with each copy promised to be one of the two GHZ states
\[
\ket{\phi_n^{\pm}}
=\frac{1}{\sqrt2}(\ket{0}^{\otimes n}\pm\ket{1}^{\otimes n}).
\]
The prover has complete knowledge of the identity of each shared state, whereas the verifiers know only the promised ensemble.

\noindent\textbf{Goal:}
The objective is for $P_1$ to convince the verifiers $V_2,V_3,\dots, V_{n}$ that he knows the identity of each shared state, without revealing any information about it.
\end{protocol}
\noindent\textbf{Protocol:}

\begin{enumerate}
\item \textbf{Challenge:} In each round, the verifiers randomly select \(V_k\) and \(r\in[N]\), uniformly and independently. Each verifier \(V_j\) measures their subsystem of the \(r\)-th shared state in the \(X\) basis, obtaining \(m_{V_j}\). All verifiers \(V_j\neq V_k\) broadcast \((j,m_{V_j})\), while \(V_k\) withhold his outcome \(m_{V_k}\).

\item \textbf{Prover Response:}

The prover measures their subsystem of the \(r\)-th shared state in the \(X\) basis, obtaining \(m_P\). Knowing the identity of the shared GHZ state and the announced outcomes \({a_{V_j}}\) for \({j\neq k}\), the prover predicts the withhold outcome of \(V_k\) as
\[
a_{V_k}=
\begin{cases}
m_P\bigoplus_{j\neq k}m_{V_j},
& \text{if the shared state is }\ket{\phi_n^+},\\
1\bigoplus m_P\bigoplus_{j\neq k}m_{V_j},
& \text{if the shared state is }\ket{\phi_n^-}.
\end{cases}
\]

The prover sends \(a_{V_k}\) to \(V_k\).

\item \textbf{Verification:}
The verifier \(V_k\) accepts the round iff
\[
m_{V_k}=a_{V_k}.
\]
The test is repeated for a sufficiently large number of rounds, with \(V_k\) selected uniformly at random in each round, ensuring that every verifier serves as the challenge verifier in a non-negligible fraction of the rounds.
\end{enumerate}

\noindent\textbf{ZKP Properties:}
\begin{itemize}
\item \textbf{Completeness:} 
For the states \(\ket{\phi_n^{+}}\) and \(\ket{\phi_n^{-}}\), the parity of the outcomes of local \(X\)-basis measurements is, respectively, even and odd. Hence, if the prover knows the identity of the shared GHZ state and the \(X\)-basis outcomes of all verifiers \(V_j\neq V_k\), the outcome of the remaining verifier \(V_k\) is uniquely determined, allowing the prover to predict \(a_{V_k}\) as \(m_{V_k}=a_{V_k}\) with certainty. Thus, \[
\Pr(\mathrm{accept})=1.
\]

\item \textbf{Soundness:} 
If the prover lacks knowledge of the shared GHZ-state identity, the corresponding correlation pattern is inaccessible to him, and his probability of correctly predicting the outcome of \(V_k\) is bounded by
\[
\Pr(\text{accept})\leq \frac12.
\]
Consequently, after \(r\) independent rounds, the soundness error is exponentially suppressed as
\[
\Pr(\text{accept})\leq 2^{-r}.
\]

\item \textbf{Zero-Knowledge:} 
The reduced states of every proper subsystem \(S\subsetneq[n]\) are identical for \(\ket{\phi_n^{+}}\) and \(\ket{\phi_n^{-}}\) \[\Tr_{S}\left[\proj{\phi_n^+}\right]=\Tr_{S}\left[\proj{\phi_n^-}\right].\] 
Consequently, even under arbitrary joint measurements on their systems, the verifiers obtain no information about the identity of the shared state, provided they have no access to the prover's quantum system. This indistinguishability follows directly from the no-signaling principle, and hence the protocol is information-theoretically zero-knowledge.
\end{itemize}

\textbf{Complexity}: Each round requires a single local projective measurement in the Pauli-\(X\) basis by each of the \(n\) parties, together with constant-time classical post-processing by the prover. The classical communication per round comprises the broadcast of the selected index \(i\in[N]\), requiring \(\lceil\log_2 N\rceil\) bits, the $(n-2)$ one-bit measurement outcomes from the verifiers \(V_j\neq V_k\), and the prover's one-bit prediction. Thus, the communication cost per round is \(O(\log N+n)\).
After \(r\) independent rounds, the protocol has computational complexity \(O(r)\), communication complexity
\[
O\left(r(\log N+n)\right),
\]
and consumes \(r\) shared GHZ states.

\noindent\textbf{Verification Loophole}: Although the protocol assumes that each shared state is promised to be either $\ket{\phi_n^+}$ or $\ket{\phi_n^-}$, this promise is not certified by the verification test itself. In particular, the separable state
\[
\rho_{P_1V_2V_3\dots V_{n}}
=
\frac{1}{2}
\left(
\ket{+}\bra{+}^{\otimes n}
+
\ket{-}\bra{-}^{\otimes n}
\right)
\]
reproduces the perfect \(X\)-basis correlations required by the protocol, with all parties obtaining identical outcomes in every round. Consequently, a dishonest prover can exploit this separable state to pass the verification test without possessing the promised GHZ entanglement. This reveals a fundamental limitation of the prepare-and-measure verification and motivates a self-testing formulation, in which the observed nonlocal correlations certify the underlying GHZ state and measurement observables, thereby establishing a device-independent protocol as described below.

\begin{protocol}{Device-independent ZKP: GHZ-State Certification} \label{pro_m2}

\noindent\textbf{Protocol Setting:}
Consider \(n\) (odd) spatially separated parties, a prover \(P_1\) and \(n-1\) verifiers \(V_2,V_3,\ldots,V_{n}\), sharing \(N\) states each of which is one of four GHZ states,
\begin{align*}
 \ket{\phi_n^{\pm}}
&=\frac{1}{\sqrt2}(\ket{0}^{\otimes n}\pm\ket{1}^{\otimes n})\\
\ket{\psi_n^{\pm}}
&=\frac{1}{\sqrt2}\big(\ket{0}\ket{1}^{\otimes (n-1)}\pm\ket{1}\ket{0}^{\otimes (n-1)}\big).\end{align*}
For each shared copy, the prover \(P_1\) holds the first qubit, while the remaining \(n-1\) qubits are distributed among the \(n-1\) verifiers. The prover knows the identity of each shared state, whereas the verifiers know only the underlying ensemble. 

\noindent\textbf{Goal:}
The goal is to certify the prover's knowledge of the state identities while revealing no information beyond the validity of the claim.
\end{protocol}

\noindent\textbf{Protocol:}

\begin{enumerate}
\item \textbf{Challenge:} In each round, the verifiers uniformly select a state index \(\mu\in[N]\) and an operator index \(i\in\{0,1,\ldots,n\}\), corresponding to \be\label{operator}\begin{aligned}
 \hat{\mathcal O}_0&=X_1X_2\cdots X_n,\\\hat{\mathcal O}_i
&=
X_1\cdots X_{i-1}Y_iY_{i+1}X_{i+2}\cdots X_n,
\quad \forall
 i\in[n],
\end{aligned} \ee
with \(n+1\equiv1\pmod n\). Here, \(X_j=\sigma_x\) and \(Y_j=\sigma_y\) denote the local measurement observables of the \(j\)-th party. The verifiers then communicate the selected pair \((\mu,i)\) to the prover. 
 
\item \textbf{Prover Operation:} Using his private knowledge of the state identity, the prover applies the corresponding local unitary

\[
U_P=
\begin{cases}
I, \text{if the state is }\ket{\phi_n^-},\\
\sigma_Z, \text{if the state is }\ket{\phi_n^+},\\
\sigma_X, \text{if the state is }\ket{\psi_n^-},\\
\sigma_Y, \text{if the state is }\ket{\psi_n^+}
\end{cases}
\] to his subsystem of the selected state, thereby mapping it to \(\ket{\phi_n^-}\) up to an irrelevant global phase \(e^{i\theta_\chi}\): 
\[
(U_P\otimes I^{\otimes n-1})\ket{\chi}
=e^{i\theta_\chi}\ket{\phi_n^-},
\ket{\chi}\in
\left\{\ket{\phi_n^\pm},\ket{\psi_n^\pm}\right\}.
\]
The prover subsequently measures his subsystem of the selected state according to the operation label \(i\) communicated in the preceding step.
\item \textbf{State certification:} Following the prover’s measurement and announcement of his outcome, the verifiers perform the prescribed measurements on their respective systems and announce their outcomes. They then verify whether the resulting outcomes satisfy the eigenvalue relations
\begin{equation}\label{eigen_value}\begin{aligned} 
\hat{\mathcal O}_0 \ket{\phi_n^-}&=(-1)\,\ket{\phi_n^-}\\\hat{\mathcal O}_i \ket{\phi_n^-}&=(+1)\,\ket{\phi_n^-}, \quad \forall
 i\in[n].
 \end{aligned}\end{equation}
\item \textbf{Verification:} By Theorem \ref{th}, satisfaction of the eigenvalue relations in Eq. \eqref{eigen_value} self-tests the selected correlations to the state $\ket{\phi_n^-}$ for odd $n$, thereby certifying that the prover possesses knowledge of the identity of shared ensemble.
\end{enumerate}

Protocol \ref{pro_m2} can be implemented using only the two states \(\ket{\phi_n^\pm}\), with the corresponding completeness, soundness, and zero-knowledge analysis remaining unchanged. Conversely, Protocol \ref{pro_m} extends directly to the four-state ensemble employed in Protocol \ref{pro_m2}, provided that the first qubit is always held by the prover. In this case, the additional state sector can be interconverted locally by the prover through a \(\sigma_X\) bit flip on the first qubit, after which the original Protocol \ref{pro_m} procedure applies without modification. Since this local preprocessing is determined solely by the prover's private state knowledge and is not revealed to the verifiers, the completeness, soundness, and zero-knowledge properties of Protocol \ref{pro_m} remain unchanged.\\

\noindent\textbf{ZKP Properties:}
\begin{itemize}
 \item \textbf{Completeness:} If the prover knows the identity of every challenged state and applies the corresponding unitary \(I,\sigma_Z,\sigma_X,\sigma_Y\), each selected state is transformed into \(\ket{\phi_n^-}\). Therefore, for any challenge $(\mu,i)$ chosen by the verifiers, the eigenvalue relations \eqref{eigen_value}
in Step 3. is satisfied with certainty and maximum Bell value
\(
\big\langle {\mathcal B}_n\big\rangle
= n+1
\) defined in Eq. \eqref{max_bell} attained deterministically. Thus the protocol has \emph{perfect completeness}:
\[
\Pr[\text{accept}\mid P \text{ honest}] = 1.
\]
 \item \textbf{Soundness:} To evaluate soundness, consider first a dishonest prover with no knowledge of the target state’s identity. The effective state reduces to the maximally mixed ensemble over the four-dimensional GHZ basis,\begin{equation}\label{eq_mix}\rho_0 = \frac{1}{4} \sum_{\mu \in { \pm }} \left( |\phi_n^\mu \rangle \langle \phi_n^\mu| + |\psi_n^\mu \rangle \langle \psi_n^\mu| \right).\end{equation}Because $\mathrm{Tr}(\rho_0 \hat{\mathcal{O}}_i) = 0$ for all $i \in \{0, \dots, n\}$, any local-realistic (LR) strategy can satisfy at most $n$ of the $n+1$ eigenvalue constraints in Eq.~\eqref{eigen_value}. Over $k$ independent verification rounds, the acceptance probability is bounded by\begin{equation*}P_{\mathrm{accept}} \le \left( \frac{n}{n+1} \right)^k = \exp\left[ -k \ln\left( 1 + \frac{1}{n} \right) \right],\end{equation*}suppressing the soundness error exponentially in $k/n$. When the prover possesses partial information-identifying the target state with prior probability $p \in (0, 1)$ a desired local unitary transformation prepares the effective shared state\begin{equation*}\sigma = p |\phi_n^-\rangle \langle \phi_n^-| + (1-p)\rho_0,\end{equation*} with $\rho_0$ defined as in Eq.~\eqref{eq_mix}. In this case, the single-round acceptance probability satisfies $P_{\mathrm{succ}}^{(1)} \le p + (1-p)n/(n+1) = 1 - (1-p)/(n+1)$. Consequently, across $k$ independent rounds, the soundness error obeys\begin{equation*}P_{\mathrm{accept}} \le \exp\left[ -k \ln\left( 1 + \frac{1-p}{n+p} \right) \right].\end{equation*}

 \item \textbf{Zero knowledge:} 
The prover's correction operation
$U_P \in \{I, \sigma_Z, \sigma_X, \sigma_Y\}$ maps \emph{every} one of the four
possible states onto the
\emph{same} canonical state $\ket{\phi_n^{-}}$ \emph{before} any measurement is performed, so that the secret state identity is
effectively erased from the system prior to the generation of any statistics
visible to the verifiers. Since for all the parties $\ket{\phi_n^{-}}$ has maximally mixed marginal states, every individual measurement outcome is perfectly
random.\\
Even if the $n-1$ verifiers are permitted to bring their individual qubits
together and perform a \emph{joint} (global) measurement, no additional
information about the state identity is gained. Under such a collective
measurement, the reduced state held by the verifiers is, for each of the four
possible GHZ states $\{\ket{\psi_n^{\pm}},\ket{\phi_n^{\pm}}\}$, formally
equivalent to one of the four maximally entangled two-qubit Bell states
$\{\ket{\Phi^{\pm}},\ket{\Psi^{\pm}}\}$, with the collective $(n-1)$-qubit block
playing the role of a single effective qubit. Since the four Bell states are
locally indistinguishable to any party lacking access to the complementary
subsystem held by the prover~\cite{kar01}, the verifiers, even acting in
concert, cannot determine which of the four states was shared. The protocol
therefore remains zero-knowledge under collective verifier measurements, as no
strategy, local or global, on the verifiers' side can extract the state
identity beyond the single bit certifying the validity of the prover's claim. Moreover, because this
certification relies only on the observed correlations reaching the algebraic
maximum $n+1$ via the self-testing/rigidity argument rather than on trusted
device descriptions, the same no-extra-leakage guarantee extends even to
verifiers with untrusted measurement apparatus, making the scheme
\emph{device-independently zero-knowledge}.\end{itemize}

\noindent\textbf{Complexity.} Let \(r\) denote the number of randomly sampled copies subjected to verification. In each round, the challenge \((\mu,i)\) requires
\(\lceil\log_2 N\rceil+\lceil\log_2(n+1)\rceil\) bits, while the prover performs one single-qubit Pauli operation and one local measurement, and each of the \(n-1\) verifiers performs one local \(X/Y\) measurement and announces one binary outcome. Hence, the total local measurement/unitary-operation cost is \(O(rn)\), and the classical communication complexity is
\(
O\!\left[r\bigl(\log N+\log n+n\bigr)\right]
=O\!\left[r(n+\log N)\right]\ \text{bits}.
\)
The verification procedure consists of checking the eigenvalue/parity relations for all \(r\) rounds. Hence the overall computational complexity is \(O(rn)\). If \(r=O(\epsilon^{-2}\log(1/\delta))\) samples are required to estimate the relevant correlations to additive accuracy \(\epsilon\) with failure probability at most \(\delta\), the computational complexity $C_{\mathrm{comp}}
$ and the communication complexity $C_{\mathrm{comm}}$ becomes,
\ben\begin{aligned}
C_{\mathrm{comp}}
&=O\!\left(n\epsilon^{-2}\log\frac{1}{\delta}\right),\\
C_{\mathrm{comm}}&
=O\!\left[(n+\log N)\epsilon^{-2}\log\frac{1}{\delta}\right]. 
\end{aligned}
\een For constant \(\epsilon\) and \(\delta\), these reduce to \(O(nr)\) computational complexity and \(O[r(n+\log N)]\) classical communication.
\section{Simulation of the Verifier's View} 
\label{sec:simulator}
The zero-knowledge property is established by showing that the verifier's view can be simulated without access to the prover's secret state. Specifically, we compare the superoperator describing the verifier's view in the real protocol with a simulator superoperator constructed independently of the prover's secret information. Equality, or more generally indistinguishability, of these superoperators ensures that the verifier gains no information about the witness beyond that implied by the validity of the statement. In what follows, we explicitly construct the simulators for the two device-independent protocols, (\ref{pro_3}) and (\ref{pro_m2}). The zero-knowledge property of the remaining protocols follows directly from these constructions. We consider malicious quantum-capable verifiers whose auxiliary system is initially independent of the prover-verifier Bell state. For arbitrary quantum verifiers, we initially assume that the auxiliary quantum state is tensor-product with the prover–verifier entangled resource, corresponding to the idealized setting. In general, however, such a factorization need not hold in the presence of imperfections or coherent correlations. These coherent attacks and the corresponding robust treatment are addressed in Section \ref{sec:robust}.

\subsection{Simulator for Protocol \ref{pro_3}}
\noindent\textbf{Transcript Super-operator:} Let each of the shared state $\ket{\beta_{s_j}}$ be prepared in one of the bell states, $\lvert\beta_{s_j}\rangle
\in
\{
\lvert\phi^\pm\rangle,
\lvert\psi^\pm\rangle
\}$ and then the shared state be represented as,\[
|\Phi_{\mathcal{S}}\rangle_{PV}
=
\bigotimes_{j=1}^{N}
|\beta_{s_j}\rangle_{P^jV^j}.\]
The identity of each of the states i.e., the string 
\(
\mathcal S=(s_1,\ldots,s_N)
\)
is known only to the prover.\\ In each protocol round, the verifier selects uniformly at random a subset
\(
\Lambda\subseteq[N],~~
|\Lambda|=k.
\) The set of all $k$-element subsets is denoted by
\(
 \tilde{\Lambda}_k
=
\left\{
\Lambda\subseteq[N]:|\Lambda|=k
\right\},
\)
\(
|\tilde{\Lambda}_k|
=
\binom{N}{k}.
\)

For each Bell state, the prover applies a local unitary
$U_{s_j}$ such that
\begin{equation*}
(U_{s_j}\otimes \mathbb{I})
\ket{\beta_{s_j}}
=
e^{i\theta_{s_j}}
\ket{\psi^-}.
\end{equation*}
For a selected subset $\Lambda$, defining the collective correction as, $U_\Lambda(\mathcal S)
=
\bigotimes_{i\in \Lambda}U_{s_i}$, we have 
\begin{align*}
&
\left(
U_\Lambda(\mathcal S)\otimes \mathbb{I}_{V}^{\otimes\Lambda}
\right)
\ket{\Phi_{\mathcal S}}
\nonumber\\
&\quad =
e^{i\Theta_{\mathcal S,\Lambda}}
\left(
\bigotimes_{i\in \Lambda}
\ket{\psi^-}_{P^iV^i}
\right)
\otimes
\left(
\bigotimes_{j\notin \Lambda}
\ket{\beta_{s_j}}_{P^jV^j}
\right),
\label{eq:subset_corrected_state}
\end{align*}
where \(\Theta_{\mathcal S,\Lambda}
=
\sum_{i\in \Lambda}\theta_{s_i}.\) Since the phase is global, it has no physical
consequence. Therefore, the selected subsystem is effectively
\(
\ket{\psi^-}^{\otimes k},
\)
independent of the secret string $\mathcal S$. Tracing out all prover systems for all the selected states as well as the unselected bell states gives us, \begin{align*}
\rho_V^{\mathcal S}
&=
\operatorname{Tr}_{P}
\left[
\left(
U_\Lambda(\mathcal S)\otimes \mathbb{I}
\right)
\ket{\Phi_{\mathcal S}}
\bra{\Phi_{\mathcal S}}
\left(
U_\Lambda^\dagger(\mathcal S)\otimes \mathbb{I}
\right)
\right]=
\bigotimes_{j=1}^{N}\frac{\mathbb{I}_{V^j}}{2}.
\end{align*}

For every selected state $i\in \Lambda$, the verifier chooses CHSH
measurement settings
\(
x_i,y_i\in\{1,2\},
\)
independently and uniformly.The prover's measurement settings are,
\begin{equation*}
\label{eq:prover_observables}
P_1
=
-\frac{Z+X}{\sqrt{2}},
\qquad
P_2
=
\frac{Z-X}{\sqrt{2}},
\end{equation*}
while the verifier uses
\(
V_1=X,
\qquad
V_2=Z\).

Let
\(
a_i,b_i\in\{0,1\}
\)
denote the measurement outcomes of the prover and verifier, respectively. The corresponding projectors are
\begin{equation*}
\label{eq:projectors}
\Pi_{a_i}^{P_{x_i}}
=
\frac{1}{2}
\left[
\mathbb{I}+(-1)^{a_i}P_{x_i}
\right], \quad \Pi_{b_i}^{V_{y_i}}
=
\frac{1}{2}
\left[
\mathbb{I}+(-1)^{b_i}V_{y_i}
\right]
\end{equation*}For the selected subset \(\Lambda\), define the collective projectors for the prover and verifier, respectively, as
\begin{equation*}
M_{\mathbf a}^{\Lambda,\mathbf x}
=
\bigotimes_{i\in \Lambda}
\Pi_{a_i}^{P_{x_i}},~~ \text{and } ~ N_{\mathbf b}^{\Lambda,\mathbf y}=\bigotimes_{i\in \Lambda}
\Pi_{b_i}^{V_{y_i}},
\label{eq:collective_projector_P}
\end{equation*}
where
\(\mathbf{x}=(x_i)_{i\in\Lambda }\)
and
\(\mathbf{y}=(y_i)_{i\in \Lambda}.\)
Since the corrected state of the selected subset is
\(\ket{\psi^-_n}^{\otimes k},\)
the joint probability of obtaining the outcome strings
\(\mathbf a=(a_i)_{i\in \Lambda}\)
and
\(\mathbf b=(b_i)_{i\in \Lambda}\)
is given by
\begin{align*}
&p(\mathbf a,\mathbf b
\mid \Lambda,\mathbf x,\mathbf y,\mathcal S)
\nonumber =
\operatorname{Tr}
\left[
\left(
M_{\mathbf a}^{\Lambda,\mathbf x}
\otimes
N_{\mathbf b}^{\Lambda,\mathbf y}
\right)
\ket{\psi^-}\bra{\psi^-}^{\otimes k}
\right]
\nonumber\\
&\quad =
\prod_{i\in \Lambda}
\operatorname{Tr}
\left[
\left(
\Pi_{a_i}^{P_{x_i}}
\otimes
\Pi_{b_i}^{V_{y_i}}
\right)
\ket{\psi^-}\bra{\psi^-}
\right]\\&\quad =4^{-k}
\prod_{i\in \Lambda}
\left[
1+(-1)^{a_i+b_i}E_{x_i y_i}
\right].
\label{eq:joint_subset_probability}
\end{align*}

Where for each selected pair \(i\in \Lambda\),
\(
E_{x_iy_i}=\left\langle P_{x_i}\otimes V_{y_i}\right\rangle_{\ket{\psi^-}}.
\)\\
Particularly for any two strings $\mathcal S,\mathcal S'$,
\begin{equation}
\label{eq:probability_secret_independence}
p(\mathbf a,\mathbf b
\mid \Lambda,\mathbf x,\mathbf y,\mathcal S)
=
p(\mathbf a,\mathbf b
\mid \Lambda,\mathbf x,\mathbf y,\mathcal S').
\end{equation}

Let $V^*$ be an arbitrary quantum-capable verifier with an auxiliary
register $E$, initially
uncorrelated with the shared Bell states,i.e.,
\begin{equation*}
\label{eq:initial_auxiliary}
\rho_{PVE}^{\mathcal S}
=
\ket{\Phi_{\mathcal S}}\bra{\Phi_{\mathcal S}}
\otimes\sigma_E,
\end{equation*} 
For fixed \((\Lambda,\mathbf{x},\mathbf{y})\), let \(\mathcal W_{\Lambda,\mathbf{x},\mathbf{y}}\) denote an arbitrary CPTP map applied by \(V^*\) prior to receiving the prover's response, and \(\mathcal R_{\Lambda,\mathbf{x},\mathbf{y},\mathbf{a}}\) an arbitrary CPTP map applied thereafter. For fixed \((\Lambda,\mathbf{x},\mathbf{y},\mathbf{a},\mathbf{b})\), let \(\rho_{VE}^{\Lambda,\mathbf{x},\mathbf{y},\mathbf{a},\mathbf{b},\mathcal S}\) be the corresponding normalized conditional state of the verifier's quantum registers immediately before \(\mathcal W\). Since the conditional state and probability are independent of the secret \(\mathcal S\), and CPTP maps preserve equality, the resulting verifier state remains independent of \(\mathcal S\).

\begin{align}
&
\mathcal R_{\Lambda,\mathbf x,\mathbf y,\mathbf a}
\circ
\mathcal W_{\Lambda,\mathbf x,\mathbf y}
\left[
\rho_{VE}^{\Lambda,\mathbf x,\mathbf y,\mathbf a,\mathbf b,\mathcal S}
\right]
\nonumber\\
&\qquad =
\mathcal R_{\Lambda,\mathbf x,\mathbf y,\mathbf a}
\circ
\mathcal W_{\Lambda,\mathbf x,\mathbf y}
\left[
\rho_{VE}^{\Lambda,\mathbf x,\mathbf y,\mathbf a,\mathbf b,\mathcal S'}
\right].
\label{eq:CPTP_preserves_independence}
\end{align}

To obtain the classical transcript, we dephase the transcript registers \(T=(\Lambda,\mathbf{x},\mathbf{y},\mathbf{a},\mathbf{b})\), corresponding to measurement in their computational bases while retaining the outcomes. Crucially, the dephasing acts as \(\mathcal D_T\otimes\mathcal I_{VE}\), leaving the verifier's quantum registers \(VE\) fully coherent. The resulting complete verifier view is therefore

\begin{align*}
\rho_{\mathrm{real}}^{V^*}(\mathcal S)
&=
\frac{1}{\binom Nk\,4^k}
\sum_{\substack{\Lambda\in\tilde{\Lambda}_k\\
\mathbf x,\mathbf y}}
\sum_{\mathbf a,\mathbf b}
p(\mathbf a,\mathbf b
\mid \Lambda,\mathbf x,\mathbf y,\mathcal S)
\nonumber\\
&\quad \otimes
\ket{\Lambda,\mathbf x,\mathbf y,\mathbf a,\mathbf b}
\bra{\Lambda,\mathbf x,\mathbf y,\mathbf a,\mathbf b}
\nonumber\\
&\quad\otimes
\mathcal R_{\Lambda,\mathbf x,\mathbf y,\mathbf a}
\circ
\mathcal W_{\Lambda,\mathbf x,\mathbf y}
\left[\rho_{VE}^{\Lambda,\mathbf x,\mathbf y,\mathbf a,\mathbf b,\mathcal S}
\right].
\label{eq:real_view_subset}
\end{align*}

Using Eqs.~\eqref{eq:probability_secret_independence} and
\eqref{eq:CPTP_preserves_independence}, we obtain
\begin{equation}
\rho_{\mathrm{real}}^{V^*}(\mathcal S)
=
\rho_{\mathrm{real}}^{V^*}(\mathcal S')
\label{eq:perfect_ZK_subset}
\end{equation}
for every pair of secret strings $\mathcal S$ and $\mathcal S'$.\\

\noindent\textbf{Simulator Super-operator:} The simulator does not need to know the actual secret string \(\mathcal S\), instead it proceeds as follows:
\begin{enumerate}
 \item Choose a fixed reference string
 \(
 \mathcal S_0=(0,\ldots,0)
 \)
 and prepare
 \[
 \rho_{PVE}^{\mathcal S_0}
 =
 |\beta_{\mathcal S_0}\rangle\langle\beta_{\mathcal S_0}|
 \otimes\sigma_E,
 \quad
 |\beta_{\mathcal S_0}\rangle
 =
 |\psi^-\rangle^{\otimes N}.
 \]

 \item Interacting with the verifier it obtains the verifier's challenge
 \(\Lambda\in\tilde{\Lambda}_k,~~ |\Lambda|=k,
 \)
 through the prescribed classical subset-selection interface. For every selected \(i\in \Lambda\), it obtains the CHSH measurement setting pair
 \((x_i,y_i).
 \)
 \item The simulator then internally executes the verifier $V^*$ on this reference
 state. For every selected subset $\Lambda$, it performs the reference
 corrections and CHSH measurements and generates the corresponding prover
 responses.
\end{enumerate}
The simulator's output state is $\rho_{\mathrm{sim}}^{V^*}$ which is equal to $\rho_{\mathrm{real}}^{V^*}(\mathcal S_0)$ and by Eq.~\eqref{eq:perfect_ZK_subset}, 
\begin{equation*}
\label{eq:simulator_output}
\rho_{\mathrm{sim}}^{V^*}
=
\rho_{\mathrm{real}}^{V^*}(\mathcal S_0)=\rho_{\mathrm{real}}^{V^*}(\mathcal S)
\end{equation*}
for every secret string $\mathcal S$. Thus Protocol \ref{pro_3} is perfect ZK against arbitrary malicious quantum-capable verifiers with a classical communication interface.

\subsection{Simulator for Protocol \ref{pro_m2}}
\noindent\textbf{Transcript Super-operator:} Consider an ensemble of $N$ shared $n$-qubit states, each independently chosen from the generalized GHZ set $\{\lvert\psi_n^\pm\rangle, \lvert\phi_n^\pm\rangle\}$the corresponding joint state shared among the prover and the verifiers is
\[
\lvert\Psi_\gamma\rangle = \bigotimes_{r=1}^N \lvert\gamma_r\rangle, \quad \lvert\gamma_r\rangle \in {\lvert\psi_n^\pm\rangle, \lvert\phi_n^\pm\rangle}.
\]

Upon receiving the challenge index $r$, the prover applies a local unitary correction $U_{\gamma_r}$ conditioned on the hidden identity of the $r$-th system. By construction, each selected state maps to the canonical GHZ target $\lvert\phi_n^-\rangle = (\lvert0\rangle^{\otimes n} - \lvert1\rangle^{\otimes n})/\sqrt{2}$ up to an irrelevant global phase. Consequently, the verifier's subsequent measurement statistics are strictly invariant under the initial state label. Encoding the challenge index $r \in [N]$ into a register $C_R$, its coherent purification is,
\(
\lvert\text{chal}\rangle = \frac{1}{\sqrt{N}} \sum_{r=1}^N \lvert r\rangle_{C_R}.
\)
For the generalized $(n+1)$-qubit generalized GHZ operators $\{\hat{\mathcal{O}}_i\}_{i=0}^n$ defined in Eq. \eqref{operator}, the measurement settings state, stored in the register $c_I$ is, \(|\mathrm{set}\rangle = \frac{1}{\sqrt{n+1}}\sum_{i=0}^n |i\rangle_{c_I}.\) Conditioned on setting $i$, party $j$ measures the local observable\begin{equation*}A_j^{(i)} =\begin{cases}Y_j, & i \neq 0 \text{ and } j \in \{i, i+1\},\\ X_j, &\text{otherwise},\end{cases}\end{equation*} with $n+1 \equiv 1 \pmod n$. The corresponding projection operator for party $j$ yielding outcome $a_j \in \{0,1\}$ is given by $\Pi_{a_j}^{(i)} = \frac{1}{2}[I + (-1)^{a_j} A_j^{(i)}]$. For the $r$-th shared system, the corresponding measurement operator is\begin{equation*}M_{a_j}^{(r,i)} = I^{\otimes (r-1)} \otimes \Pi_{a_j}^{(i)} \otimes I^{\otimes (N-r)}.\end{equation*}
The complete measurement outcome is denoted by
 $\mathbf{a} = (a_1, \dots, a_n) \in \{0,1\}^n$ . The corresponding $n$-party measurement operator can be written as
\[
M_{\mathbf a}^{(r,i)}
=
\bigotimes_{j=1}^{n}
M_{a_j}^{(r,i)}.
\]
To define the classical transcript, let
\(
A=(A_1,\ldots,A_n)
\)
denote the classical outcome register. The classicalization of the challenge and outcome registers is represented by the dephasing map
\[
\mathcal D_{C_RC_IA}(\rho)
=
\sum_{r,i,\mathbf a}
\left\lvert ri\mathbf a\right\rangle
\left\langle ri\mathbf a\right\rvert
\rho
\left\lvert ri\mathbf a\right\rangle
\left\langle ri \mathbf a\right\rvert.
\]
The classical transcript state of the real protocol is therefore
\begin{equation*}
\rho_{\mathrm{real}} = \frac{1}{N(n+1)} \sum_{r,i,\mathbf{a}} p(\mathbf{a} \mid i, r) |r\, i\, \mathbf{a}\rangle \langle r\, i\, \mathbf{a}|.
\end{equation*}
In an honest execution, each challenged state is transformed by the prover into the canonical state $\ket{\phi_n^-}$. The conditional outcome distribution is , $p(\mathbf{a} \mid i, r) = p_{\ket{\phi_n^-}}(\mathbf{a} \mid i)$. In particular, the distribution is independent of both the challenged index $r$ and the hidden state identity $\gamma_r$.
 and satisfies the GHZ parity constraints \eqref{eigen_value} i.e.,\begin{equation*}p_{\ket{\phi_n^-}}(\mathbf{a} \mid i) =\begin{cases}2^{-(n-1)}, & \bigoplus_{j=1}^n a_j = \delta_{i,0} \\ 0,& \text{otherwise}.\end{cases}\end{equation*} for $i \in \{0, \dots, n\}$, where $\delta_{i,0}$ is the Kronecker delta.

 If the verifier possesses an auxiliary private quantum register $E$, initially in state $\sigma_E$ and independent of the shared GHZ states,then the real transcript state has the form 
\begin{equation*}
\begin{aligned}
\rho_{\text{real}} &= \frac{1}{N(n+1)2^{n-1}} \sum_{r=1}^{N} \Big[ \sum_{\oplus_j a_j = 1} |r\,0\,\mathbf{a}\rangle\langle r\,0\,\mathbf{a}| \\&\quad + \sum_{i=1}^{n} \sum_{\oplus_j a_j = 0} |r\,i\,\mathbf{a}\rangle\langle r\,i\,\mathbf{a}| \Big] \otimes \sigma_E. 
\end{aligned}
\end{equation*}

\noindent\textbf{Simulator Super-operator:} The simulator prepares the corresponding reference state
\[
\lvert\Psi_{\boldsymbol{\gamma}_0}\rangle
=
\lvert\phi_n^-\rangle^{\otimes N}.
\]
It samples the challenged index $r$ and the measurement setting $i$ uniformly, and coherently from the verifier. Since the reference state is already the canonical state, the simulator does not need the hidden identity of the challenged state. Interacting with the verifiers, it performs the prescribed prover-side measurement on the challenged copy and obtains the corresponding outcome. Together with the outcomes generated by the verifiers, this reproduces the ideal distribution,
\begin{equation*}p_{\mathrm{sim}}(\mathbf{a}\mid r, i) = p_{|\phi_n^-\rangle}(\mathbf{a} \mid i) = p_{\mathrm{real}}(\mathbf{a}\mid r, i).\end{equation*} Since the verifiers’ arbitrary private auxiliary register \(E\), initially described by an arbitrary density operator \(\sigma_E\), is independent of the hidden-state identity, the simulator initializes its auxiliary register in the same state \(\sigma_E\). Finally, the simulator applies the same classical dephasing operation
\(
\mathcal D_{C_RC_SA}
\)
to the simulated transcript. The resulting simulator state is therefore, 
\begin{equation*}\rho_{\mathrm{sim}} = \frac{1}{N(n+1)} \sum_{r,i,\mathbf{a}} p_{\mathrm{sim}}(\mathbf{a}\mid r, i) |r\, i\, \mathbf{a}\rangle \langle r\, i\, \mathbf{a}| \otimes \sigma_E,\end{equation*} which satisfies $\rho_{\mathrm{real}} = \rho_{\mathrm{sim}}$. Because $\rho_{\mathrm{sim}}$ is generated without reference to the true state identity while matching $\rho_{\mathrm{real}}$ exactly, the classical transcript achieves perfect zero-knowledge.

\section{Self-testing of GHZ correlations}
Recently, Das et al. \cite{Das2025} introduced a self-testing scheme for $n$-qubit GHZ correlations based on the Bell operator
\be\label{GHZ_Bell}\displaystyle \mathcal{B}_n=-\hat{\mathcal{O}}_0+\sum_{i=1}^{n}\hat{\mathcal{O}}_i\ee
constructed from the eigenvalue relations in Eq. \eqref{eigen_value}. For odd $n$, local-realistic (LR) constraint provides the bound
\[
|\langle\mathcal{B}_n\rangle_{\rm LR}|\leq n-1,
\]
whereas quantum correlations allow to attain the algebraic maximum,
\begin{equation}
\label{max_bell}
\beta_Q=|\langle\mathcal{B}_n\rangle_{\rm Q}|=n+1.
\end{equation}
Moreover, saturation at $\beta_Q=\pm(n+1)$ self-tests the GHZ states $\ket{\phi_n^\mp}$, respectively, up to local unitaries.

Here, we extend this approach to self-test not only $\ket{\phi_n^\pm}$ but the complete GHZ basis ${\ket{\phi_n(r)^\pm}}_{r=1}^{n}$, where
\ben
\ket{\phi_n(r)^\pm}
=\frac{1}{\sqrt{2}}\left(
\ket{0}^{r}\ket{1}^{n-r}
\pm
\ket{1}^{r}\ket{0}^{n-r}
\right).\een
Our approach differs fundamentally from conventional inequality-based self-testing \cite{Das2025}: we require neither a Bell inequality nor a maximal-violation condition. Instead, we establish a multipartite logical no-go argument based on the set of eigenvalue relations in Eq. \eqref{eigen_value}. In contrast to self-testing based on a single maximal Bell-violation constraint, our construction employs multiple simultaneous constraints to uniquely characterize the target GHZ correlations. This structure also facilitates a direct security analysis of our zero-knowledge proof (ZKP) schemes against most general coherent attacks. We state our main results about self-testing in the following two theorems.
\begin{thm}[GHZ correlation self-testing] \label{th}
 For odd $n$, the eigenvalue relations in Eq. \eqref{eigen_value} i.e., \be\label{GHZ_P}
 \begin{aligned}
 \langle \hat{\mathcal{O}}_0\rangle_\rho&=-1\\
 \langle \hat{\mathcal{O}}_i\rangle_\rho&=+1, \forall i\in[n]\\
 \end{aligned}
 \ee self-test the corresponding correlations \(\rho=\proj{\phi_n^-}\), where \(\ket{\phi_n^-}=\frac{1}{\sqrt{2}}\left(\ket{0}^{\otimes n}-\ket{1}^{\otimes n}\right)\).\end{thm}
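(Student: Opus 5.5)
Throughout, we work in the device-independent setting. Each party $j$ holds an unknown Hilbert space $\mathcal H_j$ and two binary observables $X_j,Y_j$ with $X_j^2=Y_j^2=\mathbb I$. The shared state is an arbitrary (purified) $\ket{\psi}$, and we set $\hat{\mathcal O}_i$ to be the corresponding tensor products of Eq.~\eqref{operator}. Since every $\hat{\mathcal O}_i$ is a Hermitian unitary, the relations in Eq.~\eqref{GHZ_P} are equivalent to the stabilizer conditions
$$-\hat{\mathcal O}_0\ket{\psi}=\ket{\psi},\qquad \hat{\mathcal O}_i\ket{\psi}=\ket{\psi}\ \ \forall i\in[n].$$
This follows because $\langle \hat{\mathcal O}\rangle=\pm1$ with $\|\hat{\mathcal O}\|=1$ forces $\ket\psi$ to be an eigenvector. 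The plan is to extract from these exact eigen-relations (i) local anticommutation $\{X_j,Y_j\}\ket{\psi}=0$ on the support of each party, (ii) a local qubit structure via a SWAP-type isometry, and (iii) uniqueness of the extracted state as $\ket{\phi_n^-}$.

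\textbf{Step 1 (anticommutation).} Multiply consecutive relations. We have $\hat{\mathcal O}_i\hat{\mathcal O}_{i+1}\ket\psi=\ket\psi$, and in $\hat{\mathcal O}_i\hat{\mathcal O}_{i+1}$ all factors square to identity except at positions $i$ and $i+2$, where products $Y_iX_i$ and $X_{i+2}Y_{i+2}$ appear, together with $Y_{i+1}^2=\mathbb I$. Combining with $\hat{\mathcal O}_0\hat{\mathcal O}_i\ket\psi=-\ket\psi$, which gives $X_iY_iX_{i+1}Y_{i+1}\ket\psi=-\ket\psi$, and its variants, the goal is to isolate a single-site relation $X_jY_j\ket\psi=-Y_jX_j\ket\psi$ for every $j$. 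Cyclic indexing ($n+1\equiv1$) and odd $n$ enter here: chaining the $n$ two-site relations around the cycle gives a consistency condition whose sign works out only for odd $n$, mirroring the logical GHZ-type no-go behind the LR bound $n-1$. I expect this step, getting a genuinely \emph{local} anticommutator from products of global relations, to be the main obstacle. My first attempt will be to write the relations as $X_iY_i\otimes X_{i+1}Y_{i+1}\ket\psi=-\ket\psi$ and $Y_iX_i\otimes X_{i+1}Y_{i+1}\ket\psi=\ket\psi$ (from $\hat{\mathcal O}_i$ versus $\hat{\mathcal O}_0$ reordered), then subtract to get $\{X_i,Y_i\}\otimes X_{i+1}Y_{i+1}\ket\psi=0$, and cancel the unitary factor.

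\textbf{Step 2 (isometry).} With $\{X_j,Y_j\}$ vanishing on the relevant support, define $Z_j:=-iX_jY_j$, regularized as usual if needed. Apply the standard local SWAP isometry $\Phi=\bigotimes_j\Phi_j$, where $\Phi_j$ uses an ancilla $\ket0$, a Hadamard, controlled-$Z_j$, a Hadamard, and controlled-$X_j$. This yields $\Phi\ket\psi=\ket{\mathrm{junk}}\otimes\ket{\chi}$ with $\ket\chi$ an $n$-qubit state, and $\Phi(A_j\ket\psi)=\ket{\mathrm{junk}}\otimes\sigma_{A}^{(j)}\ket\chi$ for $A\in\{X,Y\}$. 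The computation reduces to commuting the local operators through the controlled gates using Step 1, which is routine.

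\textbf{Step 3 (uniqueness).} On the qubit side, $\ket\chi$ must satisfy $-X^{\otimes n}\ket\chi=\ket\chi$ and $\hat{\mathcal O}_i\ket\chi=\ket\chi$ with true Paulis. Products $\hat{\mathcal O}_i\hat{\mathcal O}_{i+1}\propto Z_iZ_{i+2}$ generate, for odd $n$ via the cycle, all $Z_jZ_k$ stabilizers. Together with $-X^{\otimes n}$, this gives $n$ independent commuting stabilizer generators, fixing a one-dimensional space spanned by $\ket{\phi_n^-}$. One checks directly that $\ket{\phi_n^-}$ satisfies Eq.~\eqref{eigen_value}, so $\ket\chi=\ket{\phi_n^-}$ up to phase. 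Hence $\rho$ equals $\proj{\phi_n^-}$ up to local isometries and junk, which is the claimed self-test. For even $n$ the cycle closes on the wrong sublattice, which explains the oddness hypothesis.
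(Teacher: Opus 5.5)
Your architecture (on-state anticommutation, then a SWAP isometry, then stabilizer uniqueness) is sound and differs from the paper's. However, Step~1, the only genuinely device-independent step, does not go through as written.

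The relation $Y_iX_i\otimes X_{i+1}Y_{i+1}\ket{\psi}=\ket{\psi}$ cannot be obtained from ``$\hat{\mathcal O}_i$ versus $\hat{\mathcal O}_0$ reordered''. Both operators carry identical letters at sites $i$ and $i+1$, so any product of them produces the same word at those two sites: $\hat{\mathcal O}_0\hat{\mathcal O}_i=X_iY_i\otimes X_{i+1}Y_{i+1}$, and $\hat{\mathcal O}_i\hat{\mathcal O}_0=Y_iX_i\otimes Y_{i+1}X_{i+1}$. Both have eigenvalue $-1$ on $\ket\psi$. (Also, the two relations must be added, not subtracted, to produce $\{X_i,Y_i\}$.) Your mixed relation is in fact true, but only after $X_jY_j\ket\psi$ has been propagated around the odd cycle, and that propagation is the real content of the step.

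The useful mixed-order product is the one you wrote first, $\hat{\mathcal O}_i\hat{\mathcal O}_{i+1}=Y_iX_i\otimes X_{i+2}Y_{i+2}$, which links sites at distance two. Here is how to complete the step.
\begin{itemize}
\item Write $U_j:=X_jY_j$, so $U_j^{-1}=Y_jX_j$. Then $\hat{\mathcal O}_i\hat{\mathcal O}_{i+1}\ket\psi=\ket\psi$ gives $U_{i+2}\ket\psi=U_i\ket\psi$.
\item For odd $n$, steps of two visit every site, so $U_j\ket\psi=:\ket u$ is independent of $j$.
\item Then $\hat{\mathcal O}_0\hat{\mathcal O}_i\ket\psi=-\ket\psi$ gives $U_{i+1}\ket\psi=-Y_iX_i\ket\psi$.
\item Hence $X_iY_i\ket\psi=\ket u=-Y_iX_i\ket\psi$, i.e.\ $\{X_i,Y_i\}\ket\psi=0$ for all $i$.
\end{itemize}
This is where oddness is used. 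For even $n$, steps of two stay on one sublattice and no relation between $X_kY_k\ket\psi$ and $Y_kX_k\ket\psi$ arises. Indeed, the relations then admit a deterministic local model; for $n=4$, take $x=(1,1,1,-1)$ and $y=(1,-1,1,1)$.

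Two smaller corrections follow.
\begin{itemize}
\item Step~2 alone does not yield $\Phi\ket\psi=\ket{\mathrm{junk}}\otimes\ket\chi$. It yields $\Phi(M\ket\psi)=M^{\mathrm{Pauli}}\Phi\ket\psi$ for products $M$ of the $X_j,Y_j$. This uses the operator identity $Z_jX_jZ_j=-X_j$ and $Z_j^2\ket\psi=\ket\psi$ for the unitary $Z_j=-iX_jY_j$. The product form then comes out of Step~3, because the joint $+1$ eigenspace of the Pauli stabilizers on $\mathcal H_{\mathrm{junk}}\otimes(\mathbb C^2)^{\otimes n}$ is $\mathcal H_{\mathrm{junk}}\otimes\mathrm{span}\{\ket{\phi_n^-}\}$.
\item In Step~3 oddness is not needed. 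On genuine Paulis, $\hat{\mathcal O}_0\hat{\mathcal O}_i=-Z_iZ_{i+1}$ already gives every nearest-neighbour $ZZ$ stabilizer for any $n$. So your closing remark about even $n$ belongs to Step~1, not Step~3.
\end{itemize}

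Compared with the paper: the paper proves qubit-level uniqueness by an amplitude argument ($\alpha_{\mathbf b}=-\alpha_{\bar{\mathbf b}}$, then support only where $b_i=b_{i+1}$). It excludes local-realistic models by a parity argument on the product of all relations. It then lifts to the device-independent setting via Jordan's lemma: each $(X_j,Y_j)$ splits into blocks of dimension at most two, every occupied block must saturate the relations, Lemma~\ref{lem1} is applied blockwise, and a local isometry extracts $\ket{\phi_n^-}$.

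That route is shorter and treats mixed states directly. But applying Lemma~\ref{lem1} inside a block tacitly assumes the restricted observables are $\sigma_x,\sigma_y$. Lemma~\ref{lem2} only supplies two reflections at an unknown relative angle, or commuting scalars on one-dimensional blocks. Your repaired Step~1, restricted to a block, is exactly what forces that angle to $\pi/2$ and empties the one-dimensional blocks. So your route makes explicit where oddness does the device-independent work, and the SWAP isometry also certifies the action of the measurements on the state.
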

\begin{proof}
 The proof of the theorem follows immediately from Lemmas \ref{lem1}-\ref{lem2}.
\end{proof}
\begin{lem}\label{lem1}
 For odd $n$, the eigenvalue relations in Eq. \eqref{eigen_value} are incompatible with any local-realistic (LR) correlations and uniquely identify $\ket{\phi_n^-}$, up to an irrelevant global phase.

\end{lem}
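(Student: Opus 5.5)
The lemma has two parts, and I will prove them in turn: a GHZ--Mermin-type logical contradiction for local-realistic (LR) models, and a stabilizer argument for uniqueness of the quantum state.

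\emph{LR incompatibility.} In an LR model each party $j$ carries predetermined values $x_j,y_j\in\{\pm1\}$ for $X_j$ and $Y_j$. The relations in Eq.~\eqref{eigen_value} hold with certainty, so they become deterministic constraints:
\begin{equation*}
\prod_{j=1}^n x_j=-1,\qquad
\Bigl(\prod_{j\neq i,i+1}x_j\Bigr)\,y_i\,y_{i+1}=+1\quad\forall i\in[n],
\end{equation*}
with indices taken cyclically. I will multiply all $n$ constraints with $i\ge1$. Each $y_j$ occurs exactly twice (in $i=j-1$ and $i=j$), so the $y$'s drop out. Each $x_j$ occurs $n-2$ times, which is odd for odd $n$, so the product of the left-hand sides equals $\prod_j x_j$. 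This forces $\prod_j x_j=+1$, contradicting the $i=0$ relation. Hence no LR assignment satisfies all $n+1$ relations, and at most $n$ of them can hold simultaneously. This is consistent with the soundness count used earlier.

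\emph{Uniqueness in the qubit setting.} For ideal Pauli observables, the operators $\hat{\mathcal O}_0,\dots,\hat{\mathcal O}_n$ pairwise commute, since any two differ by an even number of anticommuting local factors. I will then show that the group they generate contains $-\hat{\mathcal O}_0\hat{\mathcal O}_i = -Z_iZ_{i+1}$, using $XY=iZ$ on sites $i,i+1$; the phases $i\cdot i=-1$ combine with the overall sign to give this. The $n-1$ independent operators $Z_iZ_{i+1}$ together with $-X^{\otimes n}$ generate a stabilizer group on $n$ qubits with $n$ independent generators. Its joint $+1$ eigenspace is therefore one-dimensional, spanned by $\ket{\phi_n^-}$. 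Because $\langle\hat{\mathcal O}\rangle_\rho=\pm1$ for a $\pm1$-valued observable forces $\rho$ into the corresponding eigenspace, $\rho=\proj{\phi_n^-}$. The $Z_iZ_{i+1}$ constraints select $\mathrm{span}\{\ket{0}^{\otimes n},\ket{1}^{\otimes n}\}$, and $X^{\otimes n}=-1$ fixes the relative sign.

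\emph{Main obstacle.} The hard step is the device-independent upgrade, where the observables are arbitrary $\pm1$ operators $A_j,B_j$ on unknown spaces. My plan is the standard approach. First, use $\langle\hat{\mathcal O}_i\rangle=\pm1$ to get operator identities on the support of the state, e.g.\ $B_iB_{i+1}\prod_{j\neq i,i+1}A_j\ket{\psi}=\ket{\psi}$ and $\prod_j A_j\ket{\psi}=-\ket{\psi}$. Second, derive local anticommutation $\{A_j,B_j\}\ket{\psi}=0$ from these identities. The cyclic structure, and again the oddness of $n$, should let me chain the relations to isolate a single party's pair $A_jB_j$ versus $B_jA_j$. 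Third, build a local swap isometry from the anticommuting pairs, in the style of Ref.~\cite{Das2025}, mapping $\ket{\psi}$ to $\ket{\phi_n^-}\otimes\ket{\mathrm{junk}}$. I expect extracting the anticommutation relation for every party from the parity constraints to be the delicate step. If it resists, I will fall back on the saturation $\langle\mathcal B_n\rangle=-(n+1)$ implied by Eq.~\eqref{GHZ_P} and invoke the self-testing result of \cite{Das2025}.
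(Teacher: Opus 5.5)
Your proposal is correct and is essentially the paper's proof. The LR contradiction is the same parity count: you multiply the $n$ relations with $i\ge1$ and compare with $\hat{\mathcal O}_0$, while the paper multiplies all $n+1$. Your stabilizer argument is the operator form of the paper's coefficient computation, since combining the $\hat{\mathcal O}_0$ and $\hat{\mathcal O}_i$ relations to obtain $\alpha_{b_1\dots b_n}=(-1)^{b_i\oplus b_{i+1}}\alpha_{b_1\dots b_n}$ is exactly imposing $Z_iZ_{i+1}=+1$. Mind the sign there: $\hat{\mathcal O}_0\hat{\mathcal O}_i=-Z_iZ_{i+1}$, so $-\hat{\mathcal O}_0\hat{\mathcal O}_i=+Z_iZ_{i+1}$, which is what you actually use.

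The device-independent upgrade you single out as the main obstacle is not part of this lemma, which concerns the ideal Pauli observables. (Incidentally, Eq.~\eqref{GHZ_P} gives $\langle\mathcal B_n\rangle=+(n+1)$, not $-(n+1)$.) The paper handles that upgrade separately via Jordan's lemma (Lemma~\ref{lem2}) and a block-wise application of the present lemma, so your first two parts already constitute a complete proof.
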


\begin{proof}
Consider the relations in Eq. \eqref{GHZ_P} collectively, with the observables defined in Eq. \eqref{operator}. For each site $i\in[n]$, $Y_i$ occurs twice, whereas $X_i$ occurs $n-1$ times, which is even for odd $n$. Assigning deterministic LR values $\pm1$ to all local observables and taking the product of all relations in Eq. \eqref{GHZ_P}, each assigned value on the left-hand side occurs an even number of times, yielding $+1$. By contrast, the product of the corresponding eigenvalues on the right-hand side is $-1$, leading to a contradiction. Since any probabilistic LR strategy is a convex combination of deterministic strategies, it cannot evade this contradiction. Hence, no LR correlations can satisfy all the relations in Eq. \eqref{GHZ_P} simultaneously.

To determine whether the relations in Eq. \eqref{GHZ_P} can be satisfied by quantum correlations, and in particular by a pure $n$-qubit state, consider the most general $n$-qubit pure state,
\begin{equation*}\label{pure_state}
|\psi\rangle=\sum_{b_1,\dots,b_n\in\{0,1\}}\alpha_{b_1\dots b_n}|b_1\dots b_n\rangle,
\sum|\alpha_{b_1\dots b_n}|^2=1.
\end{equation*}\label{gen_ghz}
Using \begin{align*}
 \hat{\mathcal{O}}_0|b_1\dots b_n\rangle&=|\bar b_1\dots \bar b_n\rangle,\\
 \hat{\mathcal{O}}_i|b_1\dots b_n\rangle&=(-1)^{1\oplus b_i\oplus b_{i+1}}|\bar b_1\dots \bar b_n\rangle,
\end{align*} the first eigenvalue relation of Eq. \eqref{eigen_value} implies $\alpha_{b_1,\dots,b_n}=-\alpha_{\bar{b}_1,\dots,\bar{b}_n}$. Substituting this relation into the remaining eigenvalue equations yields $\alpha_{b_1,\dots,b_n}=(-1)^{b_i\oplus b_{i+1}}\alpha_{b_1,\dots,b_n}$ for all $i\in[n]$. Hence, $\alpha_{b_1\ldots b_n}$ can be nonzero only when $b_i=b_{i+1}$ for every $i$, restricting the support of $\ket{\psi}$ to $\ket{0}^{\otimes n}$ and $\ket{1}^{\otimes n}$. Together with the first eigenvalue relation and normalization, this uniquely gives \(\ket{\psi}=e^{-i\gamma }\ket{\phi_n^-}\) and therefore $\rho=\proj{\phi_n^-}$. 

\end{proof}

The device-independent analysis relies on the following variant of Jordan's lemma \cite{masaness}.

\begin{lem}\label{lem2}
Consider two dichotomic observables \(A\) and \(B\) acting on a Hilbert space \(\mathscr{H}\).
Then \(\mathscr{H}\) admits an orthogonal decomposition
\[
\mathscr{H}=\bigoplus_{\lambda\in\Lambda}\mathscr{H}^{\lambda},
\qquad
\dim(\mathscr{H}_{\lambda})\leq 2,
\]
such that each sector \(\mathscr{H}^{\lambda}\) is simultaneously invariant under \(A\) and \(B\). Accordingly, the two observables can be expressed as
\[
A=\bigoplus_{\lambda\in\Lambda}A^{\lambda},
\qquad
B=\bigoplus_{\lambda\in\Lambda}B^{\lambda},
\]
where \(A^{\lambda}:=A_{|{\mathscr{H}^{\lambda}}}\) and \(B^{\lambda}:=B_{|{\mathscr{H}^{\lambda}}}\) denote their respective restrictions to the invariant sector \(\mathscr{H}^{\lambda}\).
\end{lem}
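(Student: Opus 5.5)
The plan is the standard proof of Jordan's lemma, built from the unitary operator $AB$. We take $A$ and $B$ to be Hermitian with $A^2=B^2=\mathbb{I}$; this is the dichotomic case, and in the projective picture it means $A=2P-\mathbb{I}$ and $B=2Q-\mathbb{I}$ for projectors $P,Q$. We treat $\mathscr{H}$ as finite dimensional first and comment on the general case at the end. The key object is the unitary $W:=AB$. Since $AWA=BA=W^\dagger$ and $BWB=BA=W^\dagger$, conjugation by either observable maps $W$ to its adjoint. Hence $T:=W+W^\dagger=AB+BA$ commutes with both $A$ and $B$, since $A(AB+BA)A=BA+AB$ and likewise for $B$.

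\textbf{Step 1 (coarse decomposition).} Since $T$ is Hermitian and commutes with $A$ and $B$, each eigenspace $\mathscr{H}_t$ of $T$ is invariant under $A$, $B$ and $W$. So it suffices to split each $\mathscr{H}_t$ further into blocks of dimension at most $2$.

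\textbf{Step 2 (building the blocks).} On $\mathscr{H}_t$ the operator $A$ is still a Hermitian involution, so we can pick an eigenvector $v$ of $A$ with $Av=\pm v$. Set $\mathscr{K}:=\mathrm{span}\{v,Bv\}$. This subspace is invariant under $B$ because $B^2=\mathbb{I}$. For invariance under $A$, we need $ABv\in\mathscr{K}$. On $\mathscr{H}_t$ we have $AB+BA=t\,\mathbb{I}$, so $ABv=tv-BAv=tv\mp Bv\in\mathscr{K}$. Thus $\mathscr{K}$ is invariant under both observables and has dimension at most $2$.

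Because $A$ and $B$ are Hermitian, $\mathscr{K}^\perp\cap\mathscr{H}_t$ is also invariant under both. We then induct on dimension, repeating the construction inside the orthogonal complement. Collecting the resulting sectors over all eigenvalues $t$ gives $\mathscr{H}=\bigoplus_\lambda\mathscr{H}^\lambda$ with $\dim\mathscr{H}^\lambda\le 2$. The block forms $A=\bigoplus_\lambda A^\lambda$ and $B=\bigoplus_\lambda B^\lambda$ then hold automatically, since each sector is invariant and the sectors are mutually orthogonal.

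\textbf{Expected obstacle.} The delicate points are choosing the eigenvector $v$ so that the $2$-dimensional span genuinely closes under $A$, and extending to infinite-dimensional $\mathscr{H}$. The first is handled by the anticommutator identity in Step 2. For the second, I would use the spectral theorem for $T$ in place of eigenspaces. Alternatively, I would note that in the paper's setting the relevant outcome statistics involve finitely many operators, so a finite-dimensional reduction, or simply invoking the standard statement from \cite{masaness}, suffices. For the device-independent use in Theorem \ref{th}, it will additionally be useful to note that within each $2$-dimensional sector one can choose a basis in which $A^\lambda=\sigma_x$ and $B^\lambda=\cos\theta_\lambda\sigma_x+\sin\theta_\lambda\sigma_y$. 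This follows from the fact that $W$ restricted to the sector has eigenvalues $e^{\pm i\theta_\lambda}$.
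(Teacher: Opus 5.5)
The paper gives no proof of this lemma; it simply imports Jordan's lemma from the literature. For finite-dimensional $\mathscr{H}$ your argument is correct and is the standard one. The operator $T=AB+BA$ commutes with $A$ and $B$. A joint eigenvector $v$ of $T$ and $A$ then gives the invariant block $\mathrm{span}\{v,Bv\}$, via $ABv=tv\mp Bv$. Hermiticity lets you pass to orthogonal complements and induct. The more common variant starts from an eigenvector $|\alpha\rangle$ of the unitary $AB$ and uses $\mathrm{span}\{|\alpha\rangle,A|\alpha\rangle\}$; working with $T$, the real part of $AB$, is equivalent.

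The closing remark on infinite dimensions is where the proposal fails. Neither the spectral theorem for $T$ nor a ``finite-dimensional reduction'' can deliver the statement, because the lemma as literally written is false in general. Here ``as literally written'' means a direct sum of invariant sectors, each of dimension at most $2$.

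For a counterexample, take $\mathscr{H}=L^2([0,1])\otimes\mathbb{C}^2$ and $A=\mathbb{I}\otimes\sigma_z$. Let $B=M_c\otimes\sigma_z+M_s\otimes\sigma_x$, where $M_c,M_s$ are multiplication by $\cos(\pi x)$ and $\sin(\pi x)$. Then $A^2=B^2=\mathbb{I}$, and $AB+BA=2M_c\otimes\mathbb{I}$ has no eigenvectors, since $\cos(\pi x)$ is strictly monotone. Now suppose a nonzero finite-dimensional subspace were invariant under $A$ and $B$. It would be invariant under the Hermitian operator $AB+BA$, so it would contain one of its eigenvectors, which is impossible. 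So no such sector exists, and a fortiori you cannot shrink $A,B$ to a finite-dimensional invariant subspace.

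What the spectral theorem actually gives is the two-projection (Halmos) form. It reads $\mathscr{H}\cong\mathscr{H}_0\oplus(\mathscr{G}\otimes\mathbb{C}^2)$, where $A$ and $B$ commute on $\mathscr{H}_0$. On the second summand, $A=\mathbb{I}\otimes\sigma_z$ and $B=\cos 2\Theta\otimes\sigma_z+\sin 2\Theta\otimes\sigma_x$ for some positive operator $\Theta$; this is a direct integral, not a direct sum, of $2$-dimensional blocks. You should therefore state the lemma either for finite-dimensional $\mathscr{H}$, the usual setting of Jordan's lemma, or for $AB+BA$ with pure point spectrum. If arbitrary dimension is needed downstream, use the two-projection form instead.
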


For each invariant sector \(\mathcal{H}^{\lambda}\), let \(P^{\lambda}\) denote the associated projector. The physical observables \(X\) and \(Y\) then induce the sector-restricted operators

$$
X^{\lambda}:=P^{\lambda}XP^{\lambda},
\qquad
Y^{\lambda}:=P^{\lambda}YP^{\lambda}.
$$

These operators fully characterize the action of the corresponding untrusted measurements within the sector \(\mathcal{H}^{\lambda}\).

Under this decomposition, the state \(\rho\) and the observables \(\{\hat{\mathcal O}_i\}_{i=0}^{n}\) in Eq.~\eqref{operator} acquire the block-diagonal forms
\begin{equation*}
\rho=\bigoplus_{\boldsymbol{\nu}}
p_{\boldsymbol{\nu}}\rho^{\boldsymbol{\nu}}\quad\text{and}\quad
\hat{\mathcal O}_i=
\bigoplus_{\boldsymbol{\nu}}
\hat{\mathcal O}^{\boldsymbol{\nu}}_i,
\end{equation*} where \(\boldsymbol{\nu}=(\nu_1,\ldots,\nu_n)\). The corresponding block operators are
\begin{equation*}
\hat{\mathcal O}^{\boldsymbol{\nu}}_0
=X^{\nu_1}\cdots X^{\nu_n},
\end{equation*}
and, for \(i\in[n]\),
\begin{equation*}
\hat{\mathcal O}^{\boldsymbol{\nu}}_i
=X^{\nu_1}\cdots X^{\nu_{i-1}}
Y^{\nu_i}Y^{\nu_{i+1}}
X^{\nu_{i+2}}\cdots X^{\nu_n}.
\end{equation*}
It follows that
\begin{equation*}
\langle\hat{\mathcal O}_i\rangle_{\rho}=
\sum_{\boldsymbol{\nu}}p_{\boldsymbol{\nu}}
\langle\hat{\mathcal O}^{\boldsymbol{\nu}}_i\rangle_{\rho^{\boldsymbol{\nu}}},
\quad
\langle\hat{\mathcal O}^{\boldsymbol{\nu}}_i\rangle_{\rho^{\boldsymbol{\nu}}}
\equiv
\operatorname{Tr}\left(
\rho^{\boldsymbol{\mu}}\hat{\mathcal O}^{\boldsymbol{\nu}}_i
\right),
\end{equation*}
for \(i=0,\ldots,n\), where \(\boldsymbol{\nu}=(\nu_1,\ldots,\nu_n)\). Because each \(\hat{\mathcal O}^{\boldsymbol{\nu}}_i\) has eigenvalues in \(\{\pm1\}\), the extremal conditions
\begin{equation*}
\langle\hat{\mathcal O}_0\rangle_{\rho}=-1,
\qquad
\langle\hat{\mathcal O}_i\rangle_{\rho}=+1,
\quad i\in[n],
\end{equation*}
can be attained only if every occupied block saturates the same bounds, namely
\begin{equation*}
\langle\hat{\mathcal O}^{\boldsymbol{\nu}}_0\rangle_{\rho^{\boldsymbol{\nu}}}=-1,
\qquad
\langle\hat{\mathcal O}^{\boldsymbol{\nu}}_i\rangle_{\rho^{\boldsymbol{\nu}}}=+1,
\quad i\in[n].
\end{equation*}
Lemma \ref{lem1} then fixes each nonzero block to the unique common eigenstate, \(\rho^{\boldsymbol{\nu}}=\ket{\phi_n^-}
\bra{\phi_n^-}^{\boldsymbol{\nu}}\) and hence
\begin{equation*}
\rho=\ket{\Psi}\bra{\Psi},
\quad
\ket{\Psi}=
\bigoplus_{\boldsymbol{\nu}}
\sqrt{p_{\boldsymbol{\nu}}},
\ket{\phi_n^-}^{\boldsymbol{\nu}}.
\end{equation*}

To extract the ideal state, let each party append an ancillary qubit in \(\ket{0}'_j\) and implement the local isometry
\begin{equation*}
V_j:
\ket{\tau}^{\nu_j}\ket{0}'_j
\longmapsto
\ket{0}^{\nu_j}\ket{\tau}'_j,
\quad
\tau\in\{0,1\}.
\end{equation*}
The global isometry \(V=\bigotimes_{j=1}^{n}V_j\) consequently maps
\begin{equation*}
\ket{\Psi}\otimes\ket{0}'^{\otimes n}
\longmapsto
\ket{\xi}\otimes\ket{\phi_n^-}',
\end{equation*}
where \(\ket{\phi_n^-}'\) is the ideal \(n\)-qubit GHZ state on the ancillary systems and \(\ket{\xi}\) contains the residual sector degrees of freedom. Thus, the target GHZ state is locally extractable, establishing the desired self-testing statement.

The preceding analysis readily extends Theorem~\ref{th} to the complete set of $n$-qubit GHZ-basis states.

\begin{thm}\label{th2}
For an odd integer $n$, the Greenberger-Horne-Zeilinger (GHZ)-type paradox defined by the eigenvalue relations
\begin{equation*}
\label{eq:ghz_paradox}\begin{aligned}
\hat{\mathcal{O}}_0 |\psi_n(\Lambda)^\pm\rangle &= \pm |\psi_n(\Lambda)^\pm\rangle,\\
\hat{\mathcal{O}}_i |\psi_n(\Lambda)^\pm\rangle &= 
\begin{cases}
\mp |\psi_n(\Lambda)^\pm\rangle, & \text{if } \{i, i+1\} \subseteq \Lambda \\
&\text{ or } \{i, i+1\} \subseteq \bar{\Lambda}, \\
\pm |\psi_n(\Lambda)^\pm\rangle, & \text{otherwise},
\end{cases} 
\end{aligned}
\end{equation*}
self-tests the generalized $n$-qubit GHZ state
\begin{equation*}
\label{eq:target_state}
|\psi_n(\Lambda)^\pm\rangle = \frac{1}{\sqrt{2}} \left( |0\rangle^{\Lambda} |1\rangle^{\bar{\Lambda}} \pm |1\rangle^{\Lambda} |0\rangle^{\bar{\Lambda}} \right),
\end{equation*}
where $\Lambda \subseteq [n]$ and $\bar{\Lambda} = [n] \setminus \Lambda$ denotes its complement.
\end{thm}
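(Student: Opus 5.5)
The plan is to reduce Theorem~\ref{th2} to Theorem~\ref{th}. The idea is that every state $|\psi_n(\Lambda)^\pm\rangle$ is related to $\ket{\phi_n^-}$ by local Pauli operations, and these Pauli operations act on the observables in Eq.~\eqref{operator} only as sign relabelings. Concretely, set
\[
W_\Lambda=\bigotimes_{j\in\bar\Lambda}X_j,\qquad W_\Lambda|\psi_n(\Lambda)^\pm\rangle=\ket{\phi_n^\pm},
\]
and further apply $Z_1$ in the $+$ case, so that $Z_1\ket{\phi_n^+}=\ket{\phi_n^-}$. I will first check that the eigenvalue relations of Theorem~\ref{th2} are exactly the image of Eq.~\eqref{eigen_value} under this conjugation.

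\textbf{Sign bookkeeping.} Conjugation by $X_j$ fixes $X_j$ and sends $Y_j\mapsto -Y_j$. Each $\hat{\mathcal O}_i$ with $i\in[n]$ contains $Y$ precisely on sites $i$ and $i+1$. Hence $W_\Lambda\hat{\mathcal O}_iW_\Lambda=(-1)^{[i\in\bar\Lambda]+[i+1\in\bar\Lambda]}\hat{\mathcal O}_i$. The sign is $-1$ exactly when $\{i,i+1\}$ straddles the cut between $\Lambda$ and $\bar\Lambda$, while $\hat{\mathcal O}_0$ is left invariant.

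Conjugation by $Z_1$ flips both $X_1$ and $Y_1$. Every $\hat{\mathcal O}_i$, $i=0,\dots,n$, contains exactly one operator on site $1$, so all $n+1$ operators change sign. This accounts for the global $\pm$ dependence.

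Combining the two with the eigenvalues $-1,+1,\dots,+1$ of $\ket{\phi_n^-}$ from Eq.~\eqref{eigen_value} should reproduce the case distinction in the statement. I will verify this case by case (same side of the cut versus straddling, and $\pm$). If a sign convention mismatches, I will fix the overall labeling of $\pm$ accordingly. The trivial cases $\Lambda=\emptyset$ and $\Lambda=[n]$ reduce directly to $\ket{\phi_n^\pm}$.

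\textbf{Device-independent step (main obstacle).} With untrusted devices one cannot physically apply $W_\Lambda$. Instead I redefine the observables: set $Y'_j=-Y_j$ for $j\in\bar\Lambda$, and in the $+$ case additionally $X'_1=-X_1$, $Y'_1=-Y_1$. Negating a dichotomic observable yields another dichotomic observable, so Lemma~\ref{lem2} applies to the primed pair $(X'_j,Y'_j)$ with the same invariant sectors. By the bookkeeping above, the hypothesized relations become exactly Eq.~\eqref{GHZ_P} for the primed operators. Theorem~\ref{th} (via Lemmas~\ref{lem1} and \ref{lem2} and the swap isometry $V=\bigotimes_j V_j$) then yields
\[
V\bigl(\ket{\Psi}\otimes\ket{0}'^{\otimes n}\bigr)=\ket{\xi}\otimes\ket{\phi_n^-}'.
\]
Composing $V$ with the fixed local unitaries $W_\Lambda^\dagger$ and $Z_1$ on the ancilla qubits gives a local isometry extracting $|\psi_n(\Lambda)^\pm\rangle$.

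As an independent check of uniqueness at the level of states, I would also repeat the amplitude argument of Lemma~\ref{lem1} directly. The $\hat{\mathcal O}_0$ relation forces $\alpha_{\mathbf b}=\pm\alpha_{\bar{\mathbf b}}$. The $\hat{\mathcal O}_i$ relations then force $b_i\oplus b_{i+1}=[\{i,i+1\}\text{ straddles the cut}]$ for every $i$. Around the cycle these constraints are consistent because the number of straddling edges is even. The support is therefore confined to $|0\rangle^\Lambda|1\rangle^{\bar\Lambda}$ and its complement, which pins down the state up to a global phase. The LR incompatibility carries over unchanged, since the product of all relations has the same parity structure as in Lemma~\ref{lem1} with odd $n$.
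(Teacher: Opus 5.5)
Your proposal is correct, but your main argument takes a different route from the paper. The paper gives no separate proof of Theorem~\ref{th2}. It only says that the analysis behind Theorem~\ref{th} ``readily extends''. That means re-running Lemma~\ref{lem1} with the new eigenvalues and then repeating the Jordan-sector saturation and swap-isometry step:
the $\hat{\mathcal O}_0$ relation gives $\alpha_{\bar{\mathbf b}}=\pm\alpha_{\mathbf b}$;
the $\hat{\mathcal O}_i$ relations fix each $b_i\oplus b_{i+1}$ to the cut indicator of the edge $(i,i+1)$;
the even number of cut edges on the cycle makes these constraints consistent;
the product of all $n+1$ eigenvalues is still $-1$ for odd $n$, so the local-realistic contradiction survives.
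That is precisely your ``independent check''.

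Your primary argument instead uses Theorem~\ref{th} as a black box. Conjugating by $W_\Lambda$ (and also by $Z_1$ in the $+$ case), or equivalently relabeling the outcomes of the untrusted settings, turns the hypotheses into Eq.~\eqref{GHZ_P}. This buys you three things:
\begin{itemize}
\item sign flips leave the sectors of Lemma~\ref{lem2} unchanged, so the device-independent part needs no re-verification;
\item you get the extracting isometry explicitly, as $W_\Lambda V$, resp.\ $W_\Lambda Z_1 V$, with the Paulis acting on the ancillas;
\item the sign pattern in the statement is explained rather than recomputed.
\end{itemize}
The direct route is more self-contained and shows where the cycle structure and the parity of $n$ enter. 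Your reduction is shorter and is exactly as strong as Theorem~\ref{th} itself.

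One point to state precisely: the two relabelings must be composed, not overwritten. If $1\in\bar\Lambda$ in the $+$ case, $Y_1$ is flipped twice, so $X_1'=-X_1$ but $Y_1'=Y_1$. Read literally, your rule sets $Y_1'=-Y_1$, which gives the wrong signs for $\hat{\mathcal O}_1$ and $\hat{\mathcal O}_n$. Alternatively, since $\ket{\psi_n(\bar\Lambda)^\pm}=\pm\ket{\psi_n(\Lambda)^\pm}$ and the relations are symmetric under $\Lambda\leftrightarrow\bar\Lambda$, you may assume $1\in\Lambda$ from the outset.
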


\section{Robustness under noise}
\label{sec:robust}
The objective of the zero-knowledge proof (ZKP) is to enable a prover to convince the verifiers that they possess the identity of a shared quantum state without revealing the state itself. We assume that the verifiers know only the candidate ensemble and have no prior knowledge of the actual state; verification against a known state is beyond the scope of this work. The shared states are assumed to be prepared either by the prover or by an independent dealer. Soundness therefore reduces to bounding the probability that a dishonest prover, possessing incomplete or no information about the state, can falsely certify complete knowledge.

The key observation is that limited knowledge of the state constrains the prover's ability to reproduce the correlations required to win the multipartite GHZ game. Equivalently, the prover's guessing probability for the verifiers' joint outcomes is constrained by the observed GHZ-winning deficiency, or, equivalently, by the corresponding Bell-inequality violation. Thus, a bound on the maximal GHZ-game winning probability directly yields a bound on the verifiers' joint-outcome guessing probability, and vice versa. We first establish the soundness bound for a single round under an i.i.d.\ assumption and then extend it to arbitrary coherent attacks using the entropy accumulation theorem (EAT) \cite{Arnon-Friedman2018,Tomamichel2009AEP}. Since the adversarial prover constitutes the primary security threat, we conservatively allow any external eavesdropper to possess arbitrary side information and to collaborate fully with the prover.

\subsection{GHZ paradox under noise}

Any $n$-qubit density operator $\rho$ ($\rho \ge 0$, $\mathrm{Tr}[\rho] = 1$) can be expanded in the Pauli basis as\begin{equation}\label{eq:rho_pauli}\rho = \frac{1}{2^n} \sum_{i_1, \dots, i_n = 0}^{3} T_{i_1 \dots i_n} \big( \sigma_{i_1} \otimes \dots \otimes \sigma_{i_n} \big),\end{equation}where $\sigma_0 = \mathbb{I}_2$, $\{\sigma_1, \sigma_2, \sigma_3\} \equiv \{\sigma_x, \sigma_y, \sigma_z\}$, and the real correlation tensor elements are $T_{i_1 \dots i_n} = \mathrm{Tr}[\rho \, (\sigma_{i_1} \otimes \dots \otimes \sigma_{i_n})]$, with normalization $T_{0\dots 0} = 1$. Under a local Pauli conjugation on the first qubit, $\rho_m = (\sigma_m \otimes \mathbb{I}^{\otimes (n-1)}) \rho (\sigma_m \otimes \mathbb{I}^{\otimes (n-1)})$ with $m \in \{1,2,3\}$, the state retains the form of Eq.~\eqref{eq:rho_pauli} with transformed coefficients:\begin{equation*}T^\prime_{i_1 \dots i_n} = (-1)^{1 - \delta_{0 i_1} - \delta_{m i_1}+\delta_{m0}} T_{i_1 \dots i_n}.\end{equation*}This sign rule directly reflects the algebraic relation $\sigma_m \sigma_{i_1} \sigma_m = (-1)^{1 - \delta_{0 i_1} - \delta_{m i_1}+\delta_{m0}} \sigma_{i_1}$: components with $i_1 = 0$ (identity) or $i_1 = m$ commute and remain invariant, while the two orthogonal non-identity Pauli components anticommute and pick up a factor of $-1$. After receiving the challenge index $m$ from the verifiers, the prover applies $\sigma_m$ before revealing any measurement outcomes. At this stage, the verifiers have no information about $m$. On the other hand, a prover with negligible knowledge of the identity of the shared state can satisfy the GHZ paradox with only negligible probability. We therefore restrict our analysis to the state $\rho$. To characterize the state $\rho$ satisfying simultaneously the GHZ eigenvalue relations \eqref{GHZ_P} within an $\epsilon\geq 0$-tolerance, we consider the correlation thresholds:
\be \label{noise}\begin{aligned} 
\langle \hat{\mathcal{O}}_0 \rangle_{\rho} &\leq -1 + \epsilon, \\
\langle \hat{\mathcal{O}}_s \rangle_{\rho} &\geq 1 - \epsilon \quad (\forall s \in [n]),
\end{aligned}\ee
which in terms of the correlation tensor elements translates to
\ben\label{noise_1}
\begin{aligned}
-T_{11\dots 1} &\geq 1 - \epsilon, \\
T_{1\dots 1221\dots 1} &\geq 1 - \epsilon,\\T_{211\dots 12} &\geq 1 - \epsilon.
\end{aligned}\een
Defining the stabilizer generators $S_0 = -\hat{\mathcal{O}}_0$ and $S_s = \hat{\mathcal{O}}_s$ for $s \in [n]$, we observe that $S_j^2 = \mathbb{I}^{\otimes n}$ and $[S_j, S_k] = 0$. The set of $n$ independent generators $\{S_j\}_{j=0}^{n-1}$ forms an Abelian stabilizer group $\mathcal{S}$ of order $\vert{}\mathcal{S}\vert{} = 2^n$, whose $+1$ common eigenspace is one-dimensional and uniquely specifies the pure state $\vert{}\phi_n^-\rangle$. The corresponding rank-one projector is obtained via the group average:
\begin{equation*}\begin{aligned}
|\phi_n^-\rangle \langle \phi_n^-| &= \frac{1}{2^n} \sum_{S \in \mathcal{S}} S \\&
= \frac{1}{2^n} \left[ \mathbb{I}^{\otimes n} + \sum_{j=0}^{n-1} S_j + \sum_{|\Lambda| \ge 2} \prod_{k \in \Lambda} S_k \right], 
\end{aligned}
\end{equation*}
where the third term sums all higher-order products over subsets $\Lambda \subseteq \{0, \dots, n-1\}$, completing the summation over all group elements.

By robust self-testing, any valid density matrix meeting the threshold conditions in Eq.~\eqref{noise} can be written in generic perturbed form as\begin{equation}\label{rho_decomp}\rho = (1 - \eta) |\phi_n^-\rangle \langle \phi_n^-| + \eta \rho_\perp,\end{equation}where $\rho_\perp$ is a valid density matrix supported on the orthogonal complement $\mathrm{supp}(\rho_\perp) \subseteq \mathrm{span}\{\vert{}\Phi_n^-\rangle\}^\perp$, and the noise parameter satisfies $\eta \le \epsilon/2$. Consequently, $\rho$ obeys the corresponding Bell inequality $\mathrm{Tr}[\rho \mathcal{B}_n] \geq (n+1)(1 - \epsilon)$, or equivalently,\ben\begin{aligned}
-T_{1\dots 1}+T_{221\dots 1}+T_{1221\dots 1}&+T_{1\dots 122}+T_{211\dots 12}\\&\geq (n+1)\epsilon,
 \end{aligned}
\een subject to trace normalization $T_{0\dots 0} = 1$ and global positivity $\rho \ge 0$. Thus we have the following lemma. 
\begin{lem}\label{lemma1}
 If a state $\rho$ satisfies the GHZ paradox with a noise tolerance $\epsilon \ge 0$ i.e., simultaneously satisfies Eq.~\eqref{noise}, then it necessarily admits the decomposition given in Eq.~\eqref{rho_decomp}.
\end{lem}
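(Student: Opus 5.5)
The plan is to work in the joint eigenbasis of the stabilizer group $\mathcal S$ introduced just before the lemma, and to read off the decomposition \eqref{rho_decomp} from the weights $\rho$ places on the common eigenspaces.

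\textbf{Step 1: eigenbasis.} The generators $S_0=-\hat{\mathcal O}_0$ and $S_s=\hat{\mathcal O}_s$ commute and square to the identity. The $n$ independent ones, $\{S_j\}_{j=0}^{n-1}$, therefore split $(\mathbb C^2)^{\otimes n}$ into $2^n$ one-dimensional common eigenspaces, labelled by sign vectors $\mathbf s\in\{\pm1\}^n$. The remaining operator $S_n$ is, up to a fixed sign, the product of the others; for odd $n$ this sign is exactly what Lemma~\ref{lem1} encodes. Hence on each eigenvector the $n+1$ eigenvalues of $S_0,\dots,S_n$ contain an even number of $-1$'s. The all-$(+1)$ vector is $\ket{\phi_n^-}$, consistent with the group-average formula for $\proj{\phi_n^-}$.

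\textbf{Step 2: diagonal weights and the fidelity.} Write $q_{\mathbf s}=\langle \mathbf s|\rho|\mathbf s\rangle$ and $\eta:=1-\langle\phi_n^-|\rho|\phi_n^-\rangle$. Since every $S_j$ is diagonal in this basis,
\[
\langle S_j\rangle_\rho=1-2\sum_{\mathbf s:\,s_j=-1}q_{\mathbf s}.
\]
The thresholds \eqref{noise} therefore say that, for every single $j$, the total weight on eigenvectors flipping $S_j$ is at most $\epsilon/2$. To convert this into a bound on $\eta$, I use the parity constraint from Step~1. Every eigenvector other than $\ket{\phi_n^-}$ flips at least two of the $n+1$ constraints. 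Summing over $j$ gives $2\eta\le (n+1)\epsilon/2$, equivalently $B_n\le (n-3)\mathbb I+4\proj{\phi_n^-}$ when fed the Bell value $\mathrm{Tr}[\rho\mathcal B_n]\ge (n+1)(1-\epsilon)$.

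The single-constraint reading gives $\eta\le\epsilon/2$ whenever all of the non-ideal weight sits on eigenvectors that flip a common constraint. This is exactly the extremal case, a mixture with one two-flip eigenvector. Here I will try to show it is the worst case, so that $\eta\le\epsilon/2$ as stated.

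\textbf{Step 3: the orthogonal part.} Set $\rho_\perp=\eta^{-1}(\mathbb I-\proj{\phi_n^-})\rho(\mathbb I-\proj{\phi_n^-})$. It is positive, has unit trace, and is supported on $\mathrm{span}\{\ket{\phi_n^-}\}^\perp$. What remains is to dispose of the cross terms $\proj{\phi_n^-}\rho(\mathbb I-\proj{\phi_n^-})$. They do not contribute to any $\langle S_j\rangle$, nor to any statistic entering \eqref{noise}. They are also bounded in norm by $\sqrt{\eta(1-\eta)}$, by positivity of the $2\times2$ compression. I will argue that, at the level of the verified correlations, $\rho$ may be replaced by its block-diagonal part, which yields \eqref{rho_decomp} exactly.

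\textbf{Main obstacle.} The hardest step is the constant in Step~2. The direct counting argument only gives $\eta\le (n+1)\epsilon/4$. Moreover, spreading $\eta$ evenly over $(n+1)/2$ eigenvectors that flip disjoint pairs of constraints appears to saturate this bound. So I expect that $\eta\le\epsilon/2$ can only be secured by reading $\epsilon$ as the aggregate deficit $\sum_j(1-\langle S_j\rangle)/(n+1)$, possibly up to the factor $(n+1)/2$. I would check this extremal example first, before committing to the sharper constant. A second, smaller gap is that the cross terms in Step~3 are removed only operationally, not as an identity on $\rho$.
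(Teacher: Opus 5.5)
Your Step~1 and the counting in Step~2 are correct, and they already go further than the paper. The paper gives no argument for this lemma beyond the phrase ``by robust self-testing''; it simply asserts that every $\rho$ obeying Eq.~\eqref{noise} has the form \eqref{rho_decomp} with $\eta\le\epsilon/2$. Your parity claim checks out: for odd $n$ one has $\prod_{j=0}^{n}S_j=\mathbb{I}$, and $S_0,\dots,S_{n-1}$ are independent. So every joint eigenvector other than $\ket{\phi_n^-}$ flips an even, nonzero number of constraints. Since each $S_j$ is diagonal in that basis, your counting gives the sharp bound $1-\langle\phi_n^-|\rho|\phi_n^-\rangle\le (n+1)\epsilon/4$ for arbitrary $\rho$.

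The gap is that the two steps you left open cannot be closed, because the lemma is false as stated. Turn your ``main obstacle'' into a counterexample instead of trying to show that a single two-flip mixture is the worst case. Let $\ket{\chi}=\ket{0\cdots01}$. Every $\hat{\mathcal O}_i$ sends a computational basis state to its bitwise complement (up to a phase). So $\ket{\chi}$ contributes nothing to any $\langle\hat{\mathcal O}_i\rangle$, not even through cross terms with $\ket{\phi_n^-}$. Hence both $\rho=(1-\eta)\proj{\phi_n^-}+\eta\proj{\chi}$ and the pure state $\ket{\psi}=\sqrt{1-\eta}\,\ket{\phi_n^-}+\sqrt{\eta}\,\ket{\chi}$ satisfy Eq.~\eqref{noise} with $\epsilon=\eta$.

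Any decomposition \eqref{rho_decomp} forces $\eta=1-\langle\phi_n^-|\rho|\phi_n^-\rangle$. So the mixed state violates $\eta\le\epsilon/2$ for every odd $n\ge3$, and your disjoint-pairs mixture pushes this to $\eta=(n+1)\epsilon/4$. The paper's own Werner-like state fails too: with noise weight $\eta_W$ it meets Eq.~\eqref{noise} with $\epsilon=\eta_W$, yet its weight off $\ket{\phi_n^-}$ is $(1-2^{-n})\epsilon>\epsilon/2$. Reading $\epsilon$ as the average deficit does not rescue the constant, since in your example all deficits are equal. Only the total Bell deficit $\varepsilon=\sum_j(1-\langle S_j\rangle)$ gives $\eta\le\varepsilon/4$, via your inequality $\mathcal B_n\le(n-3)\mathbb{I}+4\proj{\phi_n^-}$.

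The pure state $\ket{\psi}$ settles Step~3. It has rank one, so for $0<\eta<1$ it admits no decomposition \eqref{rho_decomp} at all. Replacing $\rho$ by its block-diagonal part therefore proves a statement about the stabilizer-twirled state $2^{-n}\sum_{S\in\mathcal S}S\rho S$, not about $\rho$. The difference matters downstream: $\tfrac12\|\proj{\phi_n^-}-\proj{\psi}\|_1=\sqrt{\eta}$, whereas the identity $\tfrac12\|\proj{\phi_n^-}-\rho\|_1=\eta$ used in Lemma~\ref{Guess_lem} holds only for block-diagonal states such as the twirled one. What can actually be proved is:
\begin{itemize}
\item the fidelity bound $\langle\phi_n^-|\rho|\phi_n^-\rangle\ge 1-(n+1)\epsilon/4$;
\item the decomposition \eqref{rho_decomp} for the twirled state, with that $\eta$;
\item via Fuchs--van de Graaf, a trace distance of at most $\sqrt{(n+1)\epsilon/4}$ from $\ket{\phi_n^-}$ for $\rho$ itself.
\end{itemize}
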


In particular, the isotropic Werner-like state satisfying these bounds takes the form\begin{equation*}\label{werner}\rho_{\mathrm{W}} = (1 - \eta) |\phi_n^-\rangle \langle \phi_n^-| + \frac{\eta}{2^n} \mathbb{I}^{\otimes n},\end{equation*}where the non-identity correlation tensor elements associated with the stabilizer group evaluate to $T_{i_1 \dots i_n} = (1 - \eta) \, \langle \phi_n^-\vert{} \sigma_{i_1} \dots \sigma_{i_n} \vert{}\phi_n^-\rangle$, while all remaining coefficients vanish identically.

\subsection{ZKP security against coherent attack}

For the GHZ-paradox-based ZKP Protocol \ref{pro_m2}, the entropy accumulation theorem (EAT) \cite{Arnon-Friedman2018} provides a natural framework for lifting single-round entropy bounds to sequential, multi-round executions against coherent quantum adversaries. Because the EAT itself does not prescribe the single-round entropy rate as a function of Bell violation, this relation must be certified independently via device-independent semi-definite programming (SDP) hierarchies. To establish composable zero-knowledge security against arbitrary, coherent quantum attacks, we formulate the $N$-round execution of Protocol~\ref{pro_m2} using the Entropy Accumulation Theorem \ref{EAT_th}. A dishonest prover may correlate all $N$ rounds across arbitrary quantum side information $E$. In round $i \in [N]$, let $X_i \in \{0, \dots, n\}$ be the verifier's uniform GHZ challenge and $C_i \in \{0, 1\}$ denote the binary pass/fail indicator for the parity constraints $\langle \hat{\mathcal{O}}_0 \rangle = -1$ and $\langle \hat{\mathcal{O}}_j \rangle = 1$ ($\forall j \in [n]$). The protocol accepts provided the total accepted rounds $F = \sum_{i=1}^N C_i$ satisfy $F/N \ge 1 - \gamma$, bounding the observed winning frequency by $q_{\mathrm{obs}} \ge 1 - \gamma$.

Following the entropy accumulation framework of \cite{Arnon-Friedman2018}, we first define the EAT channel associated with the sequential implementation of our protocol.\\

\noindent\textbf{EAT channels:} The sequential verification is modeled by completely positive trace-preserving (CPTP) maps $$\mathcal{M}_i: R_{i-1} \to R_i O_i S_i C_i,$$ for $i \in [N]$, where $R_i$ is the unmeasured prover state forwarded to round $i+1$,
$O_i$ denotes the round-$i$ output relevant to the verification test,
$S_i$ is the corresponding quantum side-information register, and $C_i$
is a finite-dimensional classical register containing the outcome of the
GHZ verification test. In the present protocol $ C_i\in\{0,1\}$,
where $C_i=1$ denotes acceptance and $C_i=0$ denotes rejection of the
GHZ constraint.

For every $i\in[N]$, the maps $\mathcal{M}_i$ satisfy the following
properties.

\begin{enumerate}
 \item The registers $C_i$ are classical, while $R_i$, $O_i$, and
 $S_i$ are quantum registers. We denote the dimension of $O_i$ by
 $d_{O_i}$.

 \item Let $R'\simeq R_{i-1}$ be an auxiliary register isomorphic to
 the incoming memory. For every state
 $\tau_{R_{i-1}R'}$, define
 \begin{equation*}
 \sigma_{R_iO_iS_iC_iR'}
 =
 (\mathcal{M}_i\otimes\mathbb{I}_{R'})
 \left(\tau_{R_{i-1}R'}\right).
 \end{equation*}
 The register $C_i$ is classical and its value can be obtained by a
 measurement on $O_iS_i$ without disturbing the corresponding
 post-measurement state.

 \item Let $E$ denote an arbitrary quantum environment held by the
 prover or adversary, including any purification of the initial
 shared state and any quantum side information retained throughout
 the protocol. For any initial joint adversary-prover state $\rho_{R_0 E}$, the global state after $N$ rounds,\bean &\rho_{O_{1\sim N} S_{1\sim N} C_{1\sim N} E}&\\&\qquad= \mathrm{Tr}_{R_N}\left[ \left( \mathcal{M}_N \circ \dots \circ \mathcal{M}_1\right) \otimes \mathcal{I}_E \right] \rho_{R_0 E}, & 
 \eean where notation $X_{1\sim N}$ denotes $X_{1\sim N}=X_1,\dots ,X_N$, satisfies the Markov chain condition $O_{1\sim (i-1)} \leftrightarrow S_{1\sim (i-1)}E \leftrightarrow S_i$ for each $i \in [N]$, equivalent to the vanishing conditional mutual information: \be I(O_{1\sim (i-1)} : S_i \mid S_{1\sim (i-1)}E)_\rho = 0. \label{eq:markov}\ee \end{enumerate}
 
 The last condition is the structural requirement that permits entropy
accumulation in the presence of arbitrary inter-round quantum
correlations. In particular, the registers $R_i$ may retain quantum
memory between successive rounds, so that the above formulation does
not impose an independent-and-identically-distributed assumption on the
GHZ tests.
 
 Condition (\ref{eq:markov}) holds naturally in the ZKP setting because round-by-round side information leakage $S_i$ depends only on the current query inputs and provers' internal state, shielded from historical challenge-answer transcripts given prior side information and initial entanglement $E$.

 To quantify single-round entropic rates, let $\mathbf{p} \in \mathcal{P}(\mathcal{C})$ be a distribution over the test alphabet $\mathcal{C}$. We define the restricted state space
 
 \begin{equation*}
\begin{aligned}
\Sigma_i(\mathbf{p})
=
\Big\{
&\sigma_{O_i S_i C_i R_i R'}
=
(\mathcal{M}_i\otimes\mathcal{I}_{R'})
\bigl(\tau_{R_{i-1}R'}\bigr)
\;\Big|\;\\
&\tau_{R_{i-1}R'}\in
\mathcal{D}(R_{i-1}\otimes R'),
\quad
\sigma_{C_i}=\mathbf{p}
\Big\},
\end{aligned}
\end{equation*}

 where $\sigma_{C_i} = \sum_{c \in \mathcal{C}} \langle c \vert{} \sigma_{C_i} \vert{} c \rangle \vert{}c\rangle\langle c\vert{}$.

 \begin{defn} {Min-tradeoff function -} A continuous function $f_{\min}: \mathcal{P}(\mathcal{C}) \to \mathbb{R}$ is a min-tradeoff function for $\mathcal{M}_i$ if$$f_{\min}(\mathbf{p}) \le \inf_{\sigma \in \Sigma_i(\mathbf{p})} H(O_i \mid S_i R')_\sigma,$$with $f_{\min}(\mathbf{p}) = +\infty$ when $\Sigma_i(\mathbf{p}) = \emptyset$, where $H(A \mid B)_\sigma$ is the conditional von Neumann entropy. 
\end{defn}
In the context of the $n$-qubit GHZ game, the winning probability directly constrains the distance of $\tau_{R_{i-1}R'}$ from ideal GHZ states via self-testing bounds. The convex function $f_{\min}(\mathbf{p})$ thereby establishes an operational lower bound on the generation rate of smooth min-entropy $H_{\min}^{\varepsilon}(O_{1\sim N} \mid S_{1\sim N} E)$ against malicious verifiers or provers, reducing the full security proof to characterizing single-round nonlocal violations.

\begin{thm}[EAT for $n$-Qubit GHZ ZKP] \label{EAT_th}
Let $\{\mathcal{M}_i\}_{i=1}^N$ be EAT channels with output dimension $\dim(O_i) = d_{O_i}$ for each round $i \in [N]$, and let $\varepsilon \in (0, 1)$ be the smoothing parameter. If $f_{\min}: \mathcal{P}(\mathcal{C}) \to \mathbb{R}$ is a convex min-tradeoff function for $\{\mathcal{M}_i\}$ satisfying $f_{\min}(\mathrm{freq}(c_{1\sim N})) \ge t$ for all $c_{1\sim N} \in \Omega$ with $\operatorname{Pr}[c_{1\sim N}]_{\rho_{\vert{}\Omega}} > 0$, then the smooth conditional min-entropy of the cumulative transcript $O_{1\sim N}$ is lower bounded by\be \label{eq:eat_min}H_{\min}^{\varepsilon}(O_{1\sim N} \mid S_{1\sim N} E)_{\rho_{|\Omega}} > N t - v \sqrt{N},\ee where the second-order finite-size penalty parameter $v$ is\ben \label{eq:v_param}v = 2\left(\log(1 + 2 d_{O_i}) + |\nabla f_{\min}|\infty \right) \sqrt{1 - 2\log(\varepsilon_s \cdot p_\Omega)}.\een
\end{thm}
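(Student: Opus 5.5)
Theorem~\ref{EAT_th} is essentially the entropy accumulation theorem of \cite{Arnon-Friedman2018} instantiated for the sequential GHZ test of Protocol~\ref{pro_m2}. The plan is therefore not to re-derive entropy accumulation. Instead, we check that the channels $\mathcal{M}_i$ defined above meet every hypothesis of the general EAT, and then read off the bound \eqref{eq:eat_min} with the stated second-order term $v$.

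\textbf{Step 1: the channels.} We first show that each $\mathcal{M}_i: R_{i-1}\to R_iO_iS_iC_i$ is a legitimate EAT channel.
\begin{itemize}
\item $\mathcal{M}_i$ is CPTP, since it is a composition of the prover's Pauli correction $U_P$ and the local measurements $\hat{\mathcal O}_{X_i}$.
\item $C_i$ is a deterministic function of the classical outcomes in $O_i$ and the challenge $X_i$ stored in $S_i$, namely the parity check of Eq.~\eqref{eigen_value}. Hence $C_i$ can be read off from $O_iS_i$ without disturbance, which is condition 2.
\item The Markov condition \eqref{eq:markov} follows because $X_i$ is sampled freshly and independently by the verifiers. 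The side information $S_i$ then depends only on $X_i$ and on nothing about $O_{1\sim(i-1)}$ beyond what is already in $S_{1\sim(i-1)}E$. Concretely, we write $S_i=X_i$ as a product with everything else, so the conditional mutual information vanishes.
\end{itemize}

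\textbf{Step 2: the min-tradeoff function.} We build $f_{\min}$ from the single-round robust self-test. By Lemma~\ref{lemma1}, any state winning with frequency $1-\gamma$ has the form $(1-\eta)|\phi_n^-\rangle\langle\phi_n^-|+\eta\rho_\perp$ with $\eta\le\epsilon/2$, after lifting to the device-independent setting through the Jordan decomposition of Lemma~\ref{lem2} and the local isometry. For an ideal GHZ state, each challenged outcome string has $n-1$ uniformly random bits relative to any purifying system. Continuity of conditional entropy (Alicki--Fannes--Winter) then gives the bound
\[
H(O_i|S_iR')\ge (n-1)-g(\eta),
\]
where $g(\eta)\to0$ as $\eta\to0$. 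We take $f_{\min}$ to be an affine (hence convex) lower bound of this function of $p(C_i=1)$, obtained as a tangent line. An SDP certificate is an alternative if the analytic bound is too loose. Setting $t=\min_{q\ge1-\gamma}f_{\min}(q)$ then gives $f_{\min}(\mathrm{freq}(c_{1\sim N}))\ge t$ on the event $\Omega$.

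\textbf{Step 3: applying the EAT.} We invoke the EAT of \cite{Arnon-Friedman2018} conditioned on $\Omega$. It yields
\[
H^\varepsilon_{\min}(O_{1\sim N}|S_{1\sim N}E)_{\rho_{|\Omega}}>Nt-v\sqrt N,
\]
where $v=2(\log(1+2d_{O_i})+\|\nabla f_{\min}\|_\infty)\sqrt{1-2\log(\varepsilon_s p_\Omega)}$. Both $\|\nabla f_{\min}\|_\infty$ and $d_{O_i}$ are finite, because the outputs are $n$ bits and $f_{\min}$ is affine.

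\textbf{Main obstacle.} The hardest step is Step 2. Lemma~\ref{lemma1} bounds only the fidelity to $|\phi_n^-\rangle$, so we still need a dimension-independent continuity argument in the presence of the residual sector register $\ket{\xi}$ and the adversary's system $R'$. The first approach to try is to apply AFW continuity after the extraction isometry, where the target is the tensor product $\ket{\xi}\otimes\ket{\phi_n^-}$. On that product state $R'$ is decoupled from the extracted GHZ register, which should make the entropy bound go through.
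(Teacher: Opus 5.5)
Your Step 3 is the paper's argument: Theorem~\ref{EAT_th} restates the entropy accumulation theorem of \cite{Arnon-Friedman2018} (same registers, same conditioning on $\Omega$, same second-order term, where the cited result carries $\lceil\|\nabla f_{\min}\|_\infty\rceil$ in $v$), and the paper's proof simply invokes it. Your Steps 1--2 and the ``main obstacle'' are not needed, because the EAT-channel structure (including the Markov condition) and a convex min-tradeoff function with $f_{\min}(\mathrm{freq}(c_{1\sim N}))\ge t$ on $\Omega$ are hypotheses of the statement, not things to prove; that is fortunate, since your Step 2 leans on Lemma~\ref{lemma1} and an unfinished continuity argument.
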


Since, verification in Protocol~\ref{pro_m2} is governed by the eigenvalue constraints of Eq.~\eqref{GHZ_P} and the underlying Bell operator [Eq.~\eqref{GHZ_Bell}] is linear in the observed correlators, the proof follows directly from an extension of the entropy accumulation theorem of \cite{Arnon-Friedman2018}.

In the $n$-qubit GHZ-ZKP protocol, Eq.~(\ref{eq:eat_min}) provides the finite-round rate for extractor privacy and soundness: even if cheating provers share arbitrary entangled states across rounds, passing the GHZ nonlocal tests ($p_\Omega \approx 1$) certifies that the transcript $O_{1\sim N}$ contains at least $N t - \mathcal{O}(\sqrt{N})$ bits of smooth min-entropy independent of the adversary's quantum memory $S_{1\sim N} E$.

For the zero-knowledge protocol, soundness concerns the probability that a dishonest prover is accepted without possessing the claimed GHZ-state identity. Denoting this event by \(\mathsf{Forge}\) and the protocol acceptance event by \(\mathsf{Acc}\), the corresponding probability is bounded by
\begin{equation*}
\Pr\left[\mathsf{Forge}\wedge\mathsf{Acc}\right]
\leq
2^{-H_{\min}^{\varepsilon}(O_{1\sim N} \mid S_{1\sim N} E)_{\rho_{|\Omega}}} .
\end{equation*}
 Consequently, the probability that a dishonest prover passes the protocol while lacking the required state identity is bounded by

$$
\varepsilon^{(N)}
\leq
2^{-N t + v \sqrt{N}}.
$$
This bound holds against arbitrary coherent attacks and therefore does not rely on an i.i.d. assumption.\\

\noindent\textbf{Extension to linear Bell operators}: This analysis extends to the Bell operator $\mathcal{B}_n = -\hat{\mathcal{O}}_0 + \sum_{i=1}^n \hat{\mathcal{O}}_i$, with quantum ceiling $\beta_{\mathrm{Q}} = n+1$ and local-realistic bound $\beta_{\mathrm{LR}} = n-1$. We parameterize the violation by the normalized Bell score $\nu(\beta) = [\beta - (n-1)]/2 \in [0, 1]$. In an $N$-round sequence with observed mean $\widehat{\beta} = N^{-1}\sum_{j=1}^N \beta_j$, the protocol accepts if $\widehat{\beta} \ge \beta^* \equiv \beta_{\mathrm{obs}} - \delta$ for confidence margin $\delta > 0$. The single-round min-entropy against adversary $E$ is bounded by the guessing probability:$$r(O \mid E; \beta) \equiv -\log_2 p_{\mathrm{guess}}(O \mid E; \beta),$$ where $p_{\mathrm{guess}}(O \mid E; \beta) = \sup_{\rho, \mathcal{M}} \{ p_{\mathrm{guess}}(O \mid E) : \mathrm{Tr}(\rho \hat{\mathcal{B}}_n) \ge \beta \}$ is computed via the Navascu\'es--Pironio--Ac\'in (NPA) hierarchy \cite{Navascu}. Linearizing via an affine min-tradeoff function tangent at $\beta^*$, $f(\beta) = r(\beta^*) + r'(\beta^*)(\beta - \beta^*)$, the accumulated smooth min-entropy satisfies$$H_{\min}^{\varepsilon}(O_{1\sim N} \mid E T_{1\sim N})_{\rho\vert{}\Omega} \ge N f(\beta^*) - \mathcal{O}(\sqrt{N}),$$yielding the guessing probability bound$$p_{\mathrm{guess}}^{\varepsilon}(O_{1\sim N} \mid E T_{1\sim N}, \Omega) \le 2^{-N f(\beta^*) + \mathcal{O}(\sqrt{N})},$$and the asymptotic rate $$\lim_{N \to \infty} -\frac{1}{N} \log_2 p_{\mathrm{guess}}^{\varepsilon} = f_{\rm GHZ}(\beta_{\mathrm{obs}})$$ as $\delta \to 0$. \\

\noindent\textbf{Zero-knowledge:} Statistical zero-knowledge requires an efficient simulator reproducing the verifier's view within trace distance\be\label{fidelity1}\frac{1}{2} \left\Vert{} \rho_V^{\mathrm{real}} - \rho_V^{\mathrm{sim}} \right\Vert{}_1 \le \varepsilon_{\mathrm{ZK}}.\ee

Combined with soundness error $P_{\mathrm{accept}}^{\mathrm{dishonest}} \le \varepsilon$, this guarantees composable finite-size security against coherent attacks. Eq.~\eqref{fidelity1} directly underpins our primary theoretical guarantee, formalised in the following theorem.

\begin{thm}[Zero-knowledge criterion]\label{Guess_thm}
If the GHZ Protocol~\ref{pro_m2} achieves a maximal Bell violation within tolerance $\varepsilon_{ZK} \ge 0$, namely $\langle \mathcal{B}\rangle_\rho \ge (1+n) - \varepsilon_{ZK}$ for the Bell operator $\mathcal{B} = -\hat{\mathcal{O}}_0 + \sum_{i\in[n]}\hat{\mathcal{O}}_i$ [Eq.~\eqref{GHZ_Bell}] with odd $n$, then the verifiers' reduced view satisfies
\begin{equation*}\label{fidelity2}
\frac{1}{2}\left\| \rho_{V}^{\mathrm{real}} - \rho_{V}^{\mathrm{sim}} \right\|_1 \le \frac{\varepsilon_{\mathrm{ZK}}}{n},
\end{equation*}
where $\rho_{V}^{\mathrm{real(sim)}} = \mathrm{Tr}_{P}[\rho_{PV}^{\mathrm{real(sim)}}]$ with $V = V_2 V_3 \dots V_n$.
\end{thm}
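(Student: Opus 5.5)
Our plan is to reduce the statement to a decomposition of $\rho$ in the spectral basis of the Bell operator $\mathcal B$, obtain a fidelity bound from the Bell deficit, and then pass to the verifiers' marginal by monotonicity of the trace distance. The simulator is the one built in Section~\ref{sec:simulator} for Protocol~\ref{pro_m2}: it prepares $\ket{\phi_n^-}$ exactly, so $\rho_{PV}^{\mathrm{sim}}=\proj{\phi_n^-}$ (tensored with the verifiers' independent auxiliary register, which is identical in both worlds and can be dropped). The real post-correction state is some $\rho$ satisfying $\langle\mathcal B\rangle_\rho\ge (n+1)-\varepsilon_{ZK}$. The task is therefore to bound $\tfrac12\|\mathrm{Tr}_P\rho-\mathrm{Tr}_P\proj{\phi_n^-}\|_1$ in terms of the Bell deficit.

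First we analyse the spectrum of $\mathcal B$. With $S_0=-\hat{\mathcal O}_0$ and $S_s=\hat{\mathcal O}_s$, these generators commute and square to the identity, as noted before Lemma~\ref{lemma1}. Hence $\mathcal B=\sum_{j=0}^{n}S_j$ is diagonal in the common stabilizer eigenbasis, i.e.\ the GHZ-type basis $\{\ket{\phi_n(r)^\pm}\}$ of Theorem~\ref{th2}. Its top eigenvalue $n+1$ is attained only on $\ket{\phi_n^-}$ (Lemma~\ref{lem1}).

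The key step is the spectral gap. The product relation from Lemma~\ref{lem1} gives $\prod_{j=0}^n S_j=\pm\mathbb I$ with the sign forcing an even number of $-1$ eigenvalues in any common eigenvector. So any eigenvector orthogonal to $\ket{\phi_n^-}$ must flip at least two generators, and we expect it to have eigenvalue at most $n-3$, giving a gap of $4$. We will verify this by direct computation: for a basis state $\ket{b}$ with $b_i\neq b_{i+1}$ on some cyclic edges, the number of such edges is even, and each flips one $S_i$. Writing $\rho=(1-\eta)\proj{\phi_n^-}+\eta\rho_\perp$ plus off-diagonal terms, which do not contribute to $\langle\mathcal B\rangle$ since $\mathcal B$ is diagonal, we get $\langle\mathcal B\rangle\le (n+1)-4\eta$, hence $\eta\le\varepsilon_{ZK}/4$. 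This refines the decomposition of Lemma~\ref{lemma1}.

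Next we pass to the verifiers' marginal. Here the off-diagonal coherences between $\ket{\phi_n^-}$ and its complement matter. We will use that each GHZ-basis vector has $\mathrm{Tr}_P$ equal to a fixed pair structure on $V$, and that $\mathrm{Tr}_P$ of cross terms $\ket{\phi_n^-}\bra{\chi}$ vanishes unless $\chi$ differs from $\phi_n^-$ only on the prover's qubit. This restricts the relevant coherences to the single partner $\sigma_{X}$- or $\sigma_Z$-flipped state. Expanding in the Pauli correlation tensor of Eq.~\eqref{eq:rho_pauli}, the verifiers' marginal only involves tensor elements with $i_1=0$. Among the stabilizer elements, only those not touching site $1$ survive, and each deviates from its ideal value by at most $2\eta$. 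Averaging the $n$ constraints involving sites $2,\dots,n$ is where we hope to extract the factor $1/n$: the deficit $\varepsilon_{ZK}$ is shared among $n+1$ terms, and the marginal is sensitive only to a $1/n$ fraction of it.

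We expect this last step to be the main obstacle. A naive bound via fidelity gives only $\tfrac12\|\cdot\|_1\le\sqrt{\eta}$ or $\eta$, and obtaining the precise constant $1/n$ requires a careful accounting of which stabilizer deficits actually enter $\mathrm{Tr}_P$. We would first try the crude bound $\tfrac12\|\cdot\|_1\le\eta\le\varepsilon_{ZK}/4$, which already implies the claim for $n\le 4$, and then attempt a symmetric averaging over sites to obtain the $1/n$ scaling in general.
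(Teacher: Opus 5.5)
\textbf{Gap in the final step.} Your spectral analysis is correct. The $S_j$ commute and $\prod_{j=0}^{n}S_j=\mathbb{I}$, so every common eigenvector other than $\ket{\phi_n^-}$ flips an even number ($\ge 2$) of generators. The gap is therefore $4$, and $\eta:=1-\bra{\phi_n^-}\rho\ket{\phi_n^-}\le\varepsilon_{\mathrm{ZK}}/4$. (Minor point: the eigenbasis is the full family $\{\ket{\psi_n(\Lambda)^\pm}\}_{\Lambda\subseteq[n]}$, not only the $2n$ states $\ket{\phi_n(r)^\pm}$.)

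The problem is the final step, and it cannot be closed: the coherence you single out already violates the claimed bound. Put $\ket{\chi}=(\sigma_X\otimes\mathbb{I}^{\otimes(n-1)})\ket{\phi_n^-}$. Since $\sigma_X$ on the prover's qubit anticommutes only with $\hat{\mathcal O}_1$ and $\hat{\mathcal O}_n$, we have $\mathcal B\ket{\chi}=(n-3)\ket{\chi}$. For $\ket{\psi}=\sqrt{1-\eta}\,\ket{\phi_n^-}+\sqrt{\eta}\,\ket{\chi}$ one gets $\langle\mathcal B\rangle_\psi=(n+1)-4\eta$, so the hypothesis holds with $\varepsilon_{\mathrm{ZK}}=4\eta$. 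Yet
\[
\Tr_P\proj{\psi}-\Tr_P\proj{\phi_n^-}=-\sqrt{\eta(1-\eta)}\,\bigl(\ket{0^{n-1}}\bra{1^{n-1}}+\ket{1^{n-1}}\bra{0^{n-1}}\bigr),
\]
which is a trace distance of $\sqrt{\eta(1-\eta)}$. This exceeds $4\eta/n$ whenever $\eta<n^2/(n^2+16)$, for every odd $n\ge3$.

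For $n\ge5$ even an incoherent mixture fails. Take $(1-\eta)\proj{\phi_n^-}+\eta\proj{\chi'}$ with $\ket{\chi'}=\tfrac{1}{\sqrt2}\bigl(\ket{0}\ket{1}\ket{0}^{\otimes(n-2)}-\ket{1}\ket{0}\ket{1}^{\otimes(n-2)}\bigr)$. This $\ket{\chi'}$ has eigenvalue $n-3$ and a verifier marginal orthogonal to the ideal one, so the distance is $\eta=\varepsilon_{\mathrm{ZK}}/4>\varepsilon_{\mathrm{ZK}}/n$.

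Consequences for your plan:
\begin{itemize}
\item Your ``crude bound'' $\tfrac12\|\cdot\|_1\le\eta$ is false once cross terms are present, so it does not settle $n=3$.
\item No averaging over sites can produce a factor $1/n$.
\item What your route does prove is $\tfrac12\|\rho_V^{\mathrm{real}}-\rho_V^{\mathrm{sim}}\|_1\le\sqrt{\eta}\le\tfrac12\sqrt{\varepsilon_{\mathrm{ZK}}}$, via Fuchs--van de Graaf plus monotonicity under $\Tr_P$. The first example shows this square-root scaling is tight.
\item The linear bound $\varepsilon_{\mathrm{ZK}}/4$ holds only under the extra assumption that $\rho$ is block diagonal with respect to $\ket{\phi_n^-}$.
\end{itemize}

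\textbf{Comparison with the paper.} The paper reaches $\varepsilon_{\mathrm{ZK}}/n$ by monotonicity under $\Tr_P$, the triangle inequality through $\proj{\phi_n^-}$, and Lemma~\ref{Guess_lem}. That lemma takes the decomposition $\rho=(1-\eta)\proj{\phi_n^-}+\eta\rho_\perp$ of Lemma~\ref{lemma1}, which has no cross terms. It then asserts $\langle\mathcal B\rangle_\rho\le(1+n)-2n\eta$, giving $\eta\le\varepsilon/(2n)$ for each of the real and simulated terms.

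Your own gap computation shows this inequality fails for $n\ge3$. The component $\rho_\perp$ can be an eigenvector with eigenvalue $n-3$, so only $(n+1)-4\eta$ is available. Moreover, the decomposition discards exactly the coherence used in the counterexample above. So you should not try to recover the paper's constant: as stated, the theorem does not hold, and the honest conclusion of your argument is the $O(\sqrt{\varepsilon_{\mathrm{ZK}}})$ bound.
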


\begin{proof}
 The trace distance admits the operational variational form
\begin{equation*}
D(\rho_V^{\rm{real}},\rho_V^{\rm{sim}}) = \frac{1}{2}\|\rho_V^{\rm{real}}-\rho_V^{\rm{sim}}\|_1 = \max_{0 \le \Pi \le \mathbb{I}} \mathrm{Tr}\big[\Pi(\rho_V^{\rm{real}}-\rho_V^{\rm{sim}})\big],
\end{equation*}
which quantifies the optimal single-shot distinguishing probability. Let $\Pi_V^\star$ denote an optimal measurement operator satisfying $0 \le \Pi_V^\star \le \mathbb{I}_V$, such that
\begin{equation*}
D(\rho_V^{\rm{real}},\rho_V^{\rm{sim}}) = \mathrm{Tr}_V\left[\Pi_V^\star(\rho_V^{\rm{real}}-\rho_V^{\rm{sim}})\right].
\end{equation*}
Exploiting the duality between the partial trace and the identity channel, $\mathrm{Tr}_V[\Pi_V \mathrm{Tr}_P(\rho_{PV})] = \mathrm{Tr}_{PV}[( \mathbb{I}_P\otimes \Pi_V)\rho_{PV}]$, we express this local distinguishability directly on the global space:
\begin{equation*}
D(\rho_V^{\rm{real}},\rho_V^{\rm{sim}}) = \mathrm{Tr}_{PV}\left[( \mathbb{I}_P\otimes \Pi_V^\star )(\rho_{PV}^{\rm{real}},\rho_{PV}^{\rm{sim}})\right].
\end{equation*}
Since $0 \le \Pi_V^\star \le \mathbb{I}_V$, the extended observable satisfies $0 \le \mathbb{I}_P \otimes \Pi_V^\star \le \mathbb{I}_{PV}$, thus constituting a valid positive operator-valued measure (POVM) element on $\mathcal{H}_P \otimes \mathcal{H}_V$. Since the global trace distance optimizes over all POVM elements on the composite Hilbert space, the restricted local strategy cannot outperform the global optimum:
\ben \begin{aligned}
D\left(\rho_V^{\rm{real}},\rho_V^{\rm{sim}}\right) &= \mathrm{Tr}_{PV}\left[\left( \mathbb{I}_P\otimes \Pi_V^\star \right)\left(\rho_{PV}^{\rm{real}}-\rho_{PV}^{\rm{sim}}\right)\right]\\
&\le \max_{0 \le \Pi_{PV} \le \mathbb{I}{PV}} \mathrm{Tr}_{PV}\left[\Pi_{PV}\left(\rho_{PV}^{\rm{real}}-\rho_{PV}^{\rm{sim}}\right)\right] \\
&= D\left(\rho_{PV}^{\rm{real}},\rho_{PV}^{\rm{sim}}\right),
\end{aligned} \een
establishing the monotonicity of trace distance under the partial trace. Applying the triangle inequality with respect to the target state $\vert{}\phi_n^-\rangle$ and by using Lemma \ref{Guess_lem} we have 
\ben
\begin{aligned}
 D\left(\rho_V^{\rm{real}},\rho_V^{\rm{sim}}\right)&\le D\left(\rho_{PV}^{\rm{real}},\rho_{PV}^{\rm{sim}}\right)\\
 &\le D\left(\rho_{PV}^{\rm{real}},\proj{\phi_n^-}\right)+ D\left(\proj{\phi_n^-},\rho_{PV}^{\rm{sim}}\right)\\
 &\le \frac{\varepsilon_{\rm{ZK}}}{2n}+\frac{\varepsilon_{\rm{ZK}}}{2n}=\frac{\varepsilon_{\rm{ZK}}}{n}.
\end{aligned}
\een
\end{proof}

\begin{lem}\label{Guess_lem}
If a state $\rho$ achieves maximal Bell violation with deficiency $\varepsilon \ge 0$, such that $\langle \mathcal{B}\rangle_\rho \ge (1+n)-\varepsilon$ for the Bell operator $\mathcal{B}$ defined in Eq.~\eqref{GHZ_Bell} with odd $n$, the trace distance to the ideal state satisfies
\begin{equation*}\label{fidelity}
\frac{1}{2}\left| \|\phi_n^-\rangle\langle\phi_n^-| - \rho \right\|_1 \le \frac{\varepsilon}{2n}.
\end{equation*}
\end{lem}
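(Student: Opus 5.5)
The plan is to reduce the trace-distance estimate to a spectral-gap estimate for the Bell operator $\mathcal{B}_n=-\hat{\mathcal O}_0+\sum_{i\in[n]}\hat{\mathcal O}_i$ of Eq.~\eqref{GHZ_Bell}, and then to convert the resulting weight bound into the trace-distance bound using the decomposition of Lemma~\ref{lemma1}. I should say at the outset that the gap step, as I currently compute it, gives the weaker bound $\varepsilon/4$ rather than the stated $\varepsilon/(2n)$, unless the hypothesis carries more structure than the single scalar condition $\langle\mathcal B\rangle_\rho\ge(n+1)-\varepsilon$.

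\textbf{Step 1 (stabilizer picture).} Write $\mathcal{B}_n=\sum_{j=0}^{n}S_j$ with $S_0=-\hat{\mathcal O}_0$ and $S_s=\hat{\mathcal O}_s$. These are commuting involutions generating the stabilizer group $\mathcal S$ of $\ket{\phi_n^-}$ described before Lemma~\ref{lemma1}. Hence $\mathcal B_n$ is diagonal in the basis of joint eigenvectors $\ket{\mathbf s}$, labelled by sign patterns $\mathbf s=(s_0,\dots,s_n)\in\{\pm1\}^{n+1}$, with eigenvalue $\sum_j s_j$. By Lemma~\ref{lem1}, the value $n+1$ occurs only for $\ket{\phi_n^-}$.

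\textbf{Step 2 (decomposition and weight bound).} By Lemma~\ref{lemma1}, $\rho=(1-\eta)\proj{\phi_n^-}+\eta\rho_\perp$ with $\rho_\perp\perp\ket{\phi_n^-}$. Then
\[
\tfrac12\bigl\|\proj{\phi_n^-}-\rho\bigr\|_1=\tfrac{\eta}{2}\bigl\|\proj{\phi_n^-}-\rho_\perp\bigr\|_1=\eta ,
\]
since the two states have orthogonal supports. Next, $\langle\mathcal B_n\rangle_\rho=(1-\eta)(n+1)+\eta\,\mathrm{Tr}[\rho_\perp\mathcal B_n]$ and $\mathrm{Tr}[\rho_\perp\mathcal B_n]\le (n+1)-\Delta$, where $\Delta$ is the spectral gap of $\mathcal B_n$ below its top eigenvalue. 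So the hypothesis gives $\eta\Delta\le\varepsilon$, i.e. $\eta\le\varepsilon/\Delta$. The stated inequality is therefore equivalent to establishing $\Delta\ge 2n$.

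\textbf{Step 3 (computing $\Delta$) is the main obstacle.} I would compute $\Delta$ from the parity constraint behind Lemma~\ref{lem1}. The product $\prod_{j=0}^n S_j$ is a fixed multiple of the identity, so only sign patterns with an even number of $-1$'s occur. Flipping exactly two signs then gives eigenvalue $n-3$, which suggests $\Delta=4$. That yields $\eta\le\varepsilon/4$, not $\varepsilon/(2n)$.

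Before accepting this, I would check two things:

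\begin{itemize}
\item whether the two-flip patterns are all realised, i.e. whether the $n+1$ operators have only the single multiplicative relation among them;
\item whether the intended hypothesis is the per-constraint tolerance of Eq.~\eqref{noise}, where each $\langle S_j\rangle\ge1-\varepsilon/n$.
\end{itemize}

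Under the per-constraint reading, applying the same orthogonal-support argument to a single generator gives $2\eta\le\varepsilon/n$, hence $\eta\le\varepsilon/(2n)$, which is exactly the stated bound.

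\textbf{Plan as committed.} I would carry out Steps 1 and 2 verbatim. For Step 3, I would prove the bound through the per-generator deficits, using $\mathrm{Tr}[\rho_\perp S_j]\le1-2\,\mathrm{Tr}[\rho_\perp P_j^-]$ with $P_j^-=(\mathbb I-S_j)/2$, and then sum over $j$. I would flag explicitly that a single Bell-value deficit $\varepsilon$ alone only seems to support $\varepsilon/4$.
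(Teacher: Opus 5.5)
\textbf{Your spectral-gap check is right, and it shows the paper's proof step is false. The stated lemma fails for every odd $n\ge3$.} Your Steps 1--2 follow the paper's proof exactly. The paper invokes Lemma~\ref{lemma1}, computes the trace distance to be $\eta$, and then simply asserts $\langle\mathcal B\rangle_\rho\le(1+n)-2n\eta$, i.e.\ a gap of $2n$ below the top eigenvalue, with no argument. Here is why that is wrong. On $(\ket{b}\pm\ket{\bar b})/\sqrt2$, the operator $\hat{\mathcal O}_0$ has eigenvalue $\pm1$ and each $\hat{\mathcal O}_i$ has eigenvalue $\mp(-1)^{b_i\oplus b_{i+1}}$. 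Hence $\mathcal B_n$ has eigenvalue $\mp(n+1-2d_b)$, where $d_b$ is the (even) number of cyclic positions with $b_i\neq b_{i+1}$. All $2^n$ even-parity sign patterns of $(S_0,\dots,S_n)$ occur, each with a one-dimensional eigenspace, so your first check succeeds. The second-largest eigenvalue is exactly $n-3$. For instance, $(\ket{001}-\ket{110})/\sqrt2$ is orthogonal to $\ket{\phi_3^-}$ yet has $\langle\mathcal B_3\rangle=0$, not $\le1-n=-2$. Now take $\rho=(1-\eta)\proj{\phi_n^-}+\eta\proj{\chi}$ with $\chi$ any two-flip eigenvector. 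This $\rho$ has exactly the block form of Lemma~\ref{lemma1}, $\langle\mathcal B_n\rangle_\rho=(n+1)-4\eta$, and trace distance $\eta=\varepsilon/4>\varepsilon/(2n)$. So the lemma as stated is false, and $\eta\le\varepsilon/4$ is the tight conclusion from a scalar Bell deficit.

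\textbf{The gap in your proposal is the rescue you commit to.} A single generator does not give $2\eta\le\varepsilon/n$. One has $\langle S_j\rangle_\rho=1-2\eta\,q_j$ with $q_j=\mathrm{Tr}[\rho_\perp P_j^-]$, and $q_j=1$ only if $\rho_\perp$ lies entirely in the $-1$ eigenspace of $S_j$. Summing your per-generator relation over $j$ just recovers the Bell value with $\sum_j q_j\ge2$, i.e.\ the gap $4$ again. In fact the per-constraint reading fails outright. The uniform mixture of the $\binom{n+1}{2}$ two-flip eigenvectors has $q_j=2/(n+1)$ for every $j$. It therefore meets every tolerance $\langle S_j\rangle\ge1-\varepsilon/n$ with $\eta=(n+1)\varepsilon/(4n)>\varepsilon/(2n)$.

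The right move is to prove the corrected bound $\varepsilon/4$ and exhibit the counterexample, not to claim the original constant. Note also that both you and the paper rely on Lemma~\ref{lemma1}'s coherence-free block decomposition. For a general $\rho$ that assumption fails: the pure state $\sqrt{1-\eta}\ket{\phi_n^-}+\sqrt{\eta}\ket{\chi}$ has Bell deficit $4\eta$ but trace distance $\sqrt{\eta}$. The valid general statement is $\langle\phi_n^-|\rho|\phi_n^-\rangle\ge1-\varepsilon/4$, hence trace distance at most $\sqrt{\varepsilon}/2$.
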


\begin{proof}
 By Lemma~\ref{lemma1}, any such state admits the decomposition
 \begin{equation*}\rho = (1 - \eta) |\phi_n^-\rangle \langle \phi_n^-| + \eta \rho_\perp,\quad \eta\ge 0.\end{equation*} The trace distance gives us\ben \begin{aligned}
 D\left( \proj{\phi_n^-}, \rho\right) &= \frac{1}{2}\big\Vert{} \vert{}\phi_n^-\rangle\langle\phi_n^-\vert{} - \rho \big\Vert{}_1 \\
 &= \frac{\eta}{2} \left\|\vert{}\phi_n^-\rangle\langle\phi_n^-\vert{} - \rho_\perp\right\|_1\\
 &=\frac{\eta}{2} \left\|\Delta\right\|_1\rm{ (say)},\end{aligned}\een
where $\Delta= \vert{}\phi_n^-\rangle\langle\phi_n^-\vert{} - \rho_\perp$. By definition, the orthogonal component satisfies $\rho_\perp \ge 0$, $\mathrm{Tr}(\rho_\perp) = 1$, and $\langle \phi_n^- \vert{} \rho_\perp \vert{} \phi_n^- \rangle = 0$, ensuring that $\mathrm{supp}(\vert{}\phi_n^-\rangle\langle\phi_n^-\vert{}) \perp \mathrm{supp}(\rho_\perp)$. The difference operator $\Delta = \vert{}\phi_n^-\rangle\langle\phi_n^-\vert{} - \rho_\perp$ therefore decouples into mutually orthogonal subspaces. Specifically, $\Delta$ possesses a single positive eigenvalue $\lambda = 1$ supported on $\mathrm{span}(\vert{}\phi_n^-\rangle)$, while on the orthogonal complement $\mathrm{span}(\vert{}\phi_n^-\rangle)^\perp$ it acts as $-\rho_\perp$ with nonpositive eigenvalues $\lambda_k = -e_k \le 0$, where $\{e_k\}$ denotes the spectrum of $\rho_\perp$ ($\sum_k e_k = 1$). Summing the absolute values of the eigenvalues directly yields the trace norm $\Vert{}\Delta\Vert{}_1 = 1 + \sum_k e_k = 2$. Hence, the trace distance simplifies to
\begin{equation}\label{eq:TD}
\frac{1}{2}\left\| |\phi_n^-\rangle\langle\phi_n^-| - \rho \right\|_1 = \eta.
\end{equation} Furthermore, evaluating the Bell expectation value for this state gives
\begin{equation*}
\langle\mathcal{B}\rangle_\rho = \mathrm{Tr}(\mathcal{B}\rho) \le (1+n) - 2n\eta.
\end{equation*}
Combining this upper bound with the hypothesis $\langle \mathcal{B}\rangle_\rho \ge (1+n) - \varepsilon$, we obtain
\(\eta \le \frac{\varepsilon}{2n}\). Substituting this constraint into Eq.~\eqref{eq:TD} completes the proof.
\end{proof}

\begin{thm}[Soundness criterion]
If the GHZ Protocol~\ref{pro_m2} achieves a maximal Bell violation with deficiency $\varepsilon \ge 0$, namely $\langle \mathcal{B}\rangle_\rho \ge (1+n) - \varepsilon$ for the Bell operator $\mathcal{B} = -\hat{\mathcal{O}}_0 + \sum_{i\in[n]}\hat{\mathcal{O}}_i$ [Eq.~\eqref{GHZ_Bell}] with odd $n$, then the soundness of the protocol in view of dishonest prover is bounded by 
\begin{equation*}P_{\mathrm{accept}}^{\mathrm{dishonest}} \le \varepsilon_{\mathrm{s}}=\exp\left[-N \ln\left(1 + \frac{2-\varepsilon}{n-1}\right)\right],
\end{equation*} for $N$ consecutive rounds.
\end{thm}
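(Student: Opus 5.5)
The plan is to first prove a single-round bound $P^{(1)}_{\mathrm{accept}}\le \frac{n-1}{n+1-\varepsilon}$ and then lift it to $N$ rounds. The target is of this form because
\[
\exp\!\left[-N\ln\!\left(1+\tfrac{2-\varepsilon}{n-1}\right)\right]=\left(\frac{n-1}{n+1-\varepsilon}\right)^{N}.
\]
So the claim reduces to showing that a dishonest prover's single-round success is at most the ratio of the local-realistic Bell bound $\beta_{\rm LR}=n-1$ to the accepted Bell threshold $(n+1)-\varepsilon$.

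\textbf{Step 1 (dishonest prover's Bell value).} A prover lacking the state identity cannot apply the correct correction $U_P$. From the verifiers' side the effective state is therefore the mixture $\rho_0$ of Eq.~\eqref{eq_mix}, or more generally a state whose correlations with the challenge are uninformed. I will use Lemma~\ref{lem1} and its parity-counting argument to show that such a strategy satisfies at most $n$ of the $n+1$ constraints in Eq.~\eqref{GHZ_P}. This yields $\langle\mathcal B_n\rangle\le n-1$, the LR bound quoted from \cite{Das2025}. Partial-information strategies are convex mixtures, and I will handle them by convexity exactly as in the soundness discussion of Protocol~\ref{pro_m2}.

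\textbf{Step 2 (from Bell value to acceptance probability).} The verifiers accept only when the observed score reaches the threshold $\beta^\ast=(n+1)-\varepsilon$. I will define the normalized per-round acceptance statistic as the Bell score divided by the threshold, so that an honest run, whose value is at least $\beta^\ast$ by hypothesis, passes. Then a Markov-type inequality on the nonnegative shifted score bounds a dishonest prover's single-round success by $\beta_{\rm LR}/\beta^\ast=(n-1)/(n+1-\varepsilon)$. As a sanity check, at $\varepsilon=0$ this gives $(n-1)/(n+1)$. This is weaker than the $n/(n+1)$ counting bound from before, which is consistent because the two figures of merit differ, but the check needs care.

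\textbf{Step 3 (composition over $N$ rounds).} Under the i.i.d.\ assumption the bounds simply multiply. For coherent strategies I will argue sequentially: conditioned on any history, round $i$ uses fresh uniform challenges $X_i$ that are independent of the prover's memory $R_{i-1}E$. Hence the conditional single-round success is again bounded by Step~2, and a martingale (chain-rule) product gives $\bigl(\tfrac{n-1}{n+1-\varepsilon}\bigr)^N$. Alternatively, the same conclusion follows from Theorem~\ref{EAT_th} with an affine min-tradeoff function.

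\textbf{Main obstacle.} I expect the main obstacle to be Step~2: rigorously converting an expectation-value (Bell-score) gap into a bound on the probability of acceptance. The Bell operator is an average over challenges, not a single pass/fail event. The first thing I would try is to define acceptance per round as a score-weighted Bernoulli variable, making the linear relation exact. If that proves unjustified, I would fall back on the counting bound of Lemma~\ref{lem1} together with Lemma~\ref{Guess_lem}, which converts $\varepsilon$ into a trace-distance penalty, and accept a possibly different constant.
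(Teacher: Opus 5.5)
\textbf{Gap in Step~2.} Your outline follows the same route as the paper's proof. The paper also bounds a single round by $\beta_{\rm LR}/\beta^\ast=(n-1)/(n+1-\varepsilon)$ and raises this to the $N$th power, and it asserts that ratio without derivation. The gap is your Step~2, and it is not merely technical: the inference from $\langle\mathcal{B}\rangle\le\beta_{\rm LR}$ to $p^{(1)}\le\beta_{\rm LR}/\beta^\ast$ is false.

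With the challenge uniform on $\{0,\dots,n\}$, a round passes iff the outcome parity matches the sign in Eq.~\eqref{GHZ_P}. The pass probability is then exactly affine in the Bell value, $p^{(1)}=\frac{1}{2}+\frac{\langle\mathcal{B}\rangle}{2(n+1)}$, so a Bell value of $n-1$ corresponds to $p^{(1)}=\frac{n}{n+1}$. This is attained by the deterministic LR strategy in which every party outputs $+1$ for both $X$ and $Y$: every $\hat{\mathcal{O}}_i$ with $i\ge1$ passes and only $\hat{\mathcal{O}}_0$ fails. Repeated each round, it passes all $N$ rounds with probability $(\frac{n}{n+1})^N$. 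Passing every round implies acceptance under the per-round rule and under the averaged rule $\widehat{\beta}\ge n+1-\varepsilon$ alike. Since $\frac{n}{n+1}>\frac{n-1}{n+1-\varepsilon}$ exactly when $\varepsilon<1+\frac{1}{n}$, this strategy beats $\varepsilon_{\rm s}$ on that whole range.

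Your sanity check caught this but read it backwards. $(n-1)/(n+1)$ is \emph{smaller}, i.e.\ a stronger claim, than the attainable $n/(n+1)$, so the two are incompatible rather than two different figures of merit. Markov cannot produce the ratio either. The per-round score takes values $\pm(n+1)$, so it must be shifted by $n+1$ to be nonnegative, and Markov then gives $2n/(2n+2-\varepsilon)$, which at $\varepsilon=0$ is exactly the counting bound $n/(n+1)$. A score-weighted Bernoulli acceptance would change the protocol rather than analyse it.

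\textbf{What survives.} Your Step~3 is sound. Conditioned on any history, a fresh uniform challenge against any LR (even adaptive) strategy gives conditional pass probability at most $\frac{n}{n+1}$. So acceptance requiring all rounds to pass is bounded by $(\frac{n}{n+1})^N=\exp[-N\ln(1+\frac{1}{n})]$, which is the paper's own earlier estimate for Protocol~\ref{pro_m2}. For the averaged rule with $\varepsilon<2$, a Chernoff bound for such sequences gives $\exp[-N\,\mathrm{KL}(1-\frac{\varepsilon}{2(n+1)}\,\|\,\frac{n}{n+1})]$, where $\mathrm{KL}(a\|b)=a\ln\frac{a}{b}+(1-a)\ln\frac{1-a}{1-b}$.

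The stated $\varepsilon_{\rm s}$ holds only in the trusted model where an uninformed prover really faces $\rho_0$. There the verifiers' parity is uniform and independent of the prover's qubit, so $p^{(1)}=\frac{1}{2}\le\frac{n-1}{n+1-\varepsilon}$ for $n\ge3$. That is a different argument, and it makes the LR ratio irrelevant.

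Your fallback to the counting bound is therefore the right move, but it proves a different statement. Lemma~\ref{Guess_lem} does not yield the $\varepsilon_{\rm s}$ claimed here, because it constrains states that already score near $n+1$, not the acceptance probability of a low-scoring strategy. Nor does the min-entropy bound of Theorem~\ref{EAT_th}.
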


\begin{proof}

Under the local-realistic (LR) constraint, the Bell expectation value is bounded by $\beta_{\rm LR} \le n-1$, which limits the single-round acceptance probability for an arbitrary dishonest prover to $p_{\rm acc}^{(1),\mathrm{LR}} \le (n-1)/(n+1-\varepsilon)$. For $N$ rounds, the cumulative acceptance probability is strictly bounded by\begin{equation}\label{eq:soundness}\begin{aligned}P_{\rm accept}^{\rm dishonest} \le \varepsilon_{\rm s}(N,n)&\equiv \left(\frac{n-1}{n+1-\varepsilon}\right)^N \\&= \exp\left[-N \ln\left(1 + \frac{2-\varepsilon}{n-1}\right)\right].\end{aligned}\end{equation} Equation~\eqref{eq:soundness} establishes the soundness criterion of the protocol, demonstrating that the cheating probability is suppressed exponentially with the number of rounds $N$, with an asymptotic decay rate set by the quantum-classical margin $(2-\varepsilon)/(n-1)$.
\end{proof}
\noindent\textbf{Completeness}: For states orthogonal to the target state $\vert{}\phi_n^-\rangle$, the GHZ paradox \eqref{GHZ_P} guarantees that at least one of the $n+1$ stabilizer constraints is violated, giving $$p_{\rm acc}^{(1),\perp} \le \frac{n}{n+1}.$$ For a state $\rho = (1-\eta)\vert{}\phi_n^-\rangle\langle\phi_n^-\vert{} + \eta\rho_\perp$, the single-round acceptance probability for honest prover satisfies $$1-\eta\leq p_{\rm acc}^{(\rm honnest)}(\rho) \le 1 - \frac{\eta}{n+1},$$ showing that any non-target weight $\eta > 0$ induces a finite acceptance.\\
\noindent\textbf{Adversarial guessing bounds:} Security against dishonest provers demands an upper bound on the adversary's probability of correctly guessing the verifiers' outcomes, subject to either the GHZ paradox constraints [Eq.~\eqref{noise}] or the corresponding Bell inequality violation. In the tripartite setting, local realism and quantum theory constrain the parameter to $1/2 \ge \epsilon \ge 0$. Formulating the NPA hierarchy~\cite{Navascu} as a semidefinite program upper bounds the joint guessing probability for verifiers $V_2$ and $V_3$ (see Fig.~\ref{fig:Two_party_with_eps}). Under these constraints, the NPA hierarchy yields the analytical upper bound on the conditional guessing probability:  
\begin{equation*}
 p_{\mathrm{guess}}(V_2 V_3 \mid P_1) \le
\begin{cases}
\dfrac{1}{4} + \dfrac{\epsilon}{2} + \dfrac{\sqrt{3}}{2}\sqrt{\epsilon(1-\epsilon)}, & 0 \le \epsilon \le \dfrac{1}{4}, \\[3mm]
\dfrac{1}{2} + \epsilon, & \dfrac{1}{4} \le \epsilon \le \dfrac{1}{2}.
 \label{eq:f2bound}
\end{cases} 
\end{equation*}
\begin{figure}
 \centering
 \includegraphics[width=0.8\columnwidth]{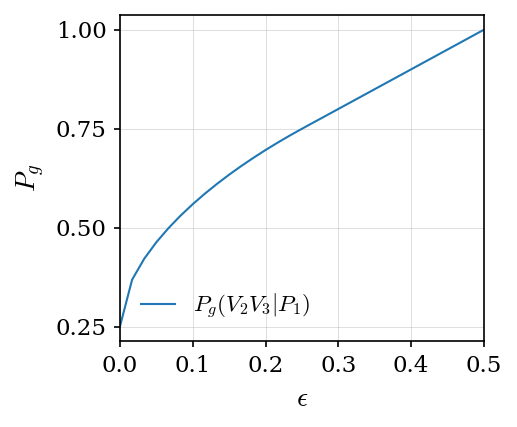}
 \caption{Guessing probability $P_g(V_2V_3|P_1)$ as a function of
 $\epsilon$, where $\epsilon$ quantifies the deviation from the GHZ paradox constraints [Eq.(\ref{noise})].}
 \label{fig:Two_party_with_eps}
\end{figure}

\section{Conclusion}
In this work, we have developed an information-theoretically secure framework for zero-knowledge certification of shared entangled states and constructed explicit bipartite and multipartite protocols based on Bell and GHZ resources. We have shown that a naive Bell-state certification protocol admits a correlation loophole, which motivates a two-basis certification scheme and, subsequently, a device-independent formulation capable of excluding higher-dimensional realizations. Extending this approach to quantum networks, we employ recently introduced self-testing of GHZ correlations and measurement structure, thereby establishing device-independent zero-knowledge protocols. The zero-knowledge property is proven by an explicit simulation of the verifier's view, without access to the prover's secret information, while completeness and soundness are maintained independently of computational assumptions. We further demonstrate that these guarantees persist under noise, establishing robustness of the protocols against imperfections in the shared quantum resources. Our results connect entanglement certification, nonlocality, self-testing, and zero-knowledge verification, and provide a foundation for cryptographic primitives in distributed quantum networks.
\section{Acknowledgment}
R. Rahaman acknowledges support from the ANRF Advanced Research Grant (ARG), Grant No. ANRF/ARG/2025/012066/MS. 
\bibliography{ref}

@article{GMR89,
  author    = {Goldwasser, Shafi and Micali, Silvio
and  Rackoff, Charles},
  title     = {The Knowledge Complexity of Interactive Proof Systems},
  journal   = {SIAM Journal on Computing},
  pages     = {186–208},
  year      = {1989},
  doi = {10.1137/0218012},
  url = {https://doi.org/10.1137/0218012}
}

@inproceedings{BenSasson2014,
  author    = {Eli Ben-Sasson and Alessandro Chiesa and Christina Garman and
               Matthew Green and Ian Miers and Eran Tromer and Madars Virza},
  title     = {Zerocash: Decentralized Anonymous Payments from Bitcoin},
  booktitle = {2014 IEEE Symposium on Security and Privacy},
  pages     = {459--474},
  year      = {2014},
  publisher = {IEEE},
  doi        = {10.1109/SP.2014.36}
}

@inproceedings{Bulletproofs2018,
  author    = {Benedikt B{\"u}nz and Jonathan Bootle and Dan Boneh and
               Andrew Poelstra and Pieter Wuille and Greg Maxwell},
  title     = {Bulletproofs: Short Proofs for Confidential Transactions and More},
  booktitle = {2018 IEEE Symposium on Security and Privacy},
  pages     = {315--334},
  year      = {2018},
  publisher = {IEEE},
  doi       = {10.1109/SP.2018.00020}
}

@book{Narayanan2016,
  author    = {Arvind Narayanan and Joseph Bonneau and Edward Felten and
               Andrew Miller and Steven Goldfeder},
  title     = {Bitcoin and Cryptocurrency Technologies: A Comprehensive Introduction},
  publisher = {Princeton University Press},
  address   = {Princeton, NJ},
  year      = {2016},
  isbn      = {9780691171692}
}

@artcile{Blckchn1,
  author={Sun, Xiaoqiang and Yu, F. Richard and Zhang, Peng and Sun, Zhiwei and Xie, Weixin and Peng, Xiang},
  journal={IEEE Network}, 
  title={A Survey on Zero-Knowledge Proof in Blockchain}, 
  year={2021},
  volume={35},
  number={4},
  pages={198-205},
  doi={10.1109/MNET.011.2000473}}

@article{blckchn2,
  author={Partala, Juha and Nguyen, Tri Hong and Pirttikangas, Susanna},
  journal={IEEE Access}, 
  title={Non-Interactive Zero-Knowledge for Blockchain: A Survey}, 
  year={2020},
  volume={8},
  number={},
  pages={227945-227961},
  doi={10.1109/ACCESS.2020.3046025}}

@article{RSA77,
author = {Rivest, R. L. and Shamir, A. and Adleman, L.},
title = {A method for obtaining digital signatures and public-key cryptosystems},
year = {1978},
issue_date = {Feb. 1978},
publisher = {Association for Computing Machinery},
address = {New York, NY, USA},
volume = {21},
number = {2},
issn = {0001-0782},
url = {https://doi.org/10.1145/359340.359342},
doi = {10.1145/359340.359342},
journal = {Commun. ACM},
month = feb,
pages = {120–126},
numpages = {7},

}

@article{DH76,
  author={Diffie, W. and Hellman, M.},
  journal={IEEE Transactions on Information Theory}, 
  title={New directions in cryptography}, 
  year={1976},
  volume={22},
  number={6},
  pages={644-654},
  doi={10.1109/TIT.1976.1055638}}

@article{SHOR97,
author = {Shor, Peter W.},
title = {Polynomial-Time Algorithms for Prime Factorization and Discrete Logarithms on a Quantum Computer},
journal = {SIAM Journal on Computing},
volume = {26},
number = {5},
pages = {1484-1509},
year = {1997},
doi = {10.1137/S0097539795293172},
URL ={https://doi.org/10.1137/S0097539795293172},
}

@article{Grover97,
author = {Grover, Lov K.},
title = {A fast quantum mechanical algorithm for database search},
year = {1996},
isbn = {0897917855},
journal = {Proceedings of the 28th Annual ACM Symposium on Theory of Computing},
publisher = {Association for Computing Machinery},
address = {New York, NY, USA},
url = {https://doi.org/10.1145/237814.237866},
doi = {10.1145/237814.237866},
booktitle = {Proceedings of the Twenty-Eighth Annual ACM Symposium on Theory of Computing},
pages = {212–219},
numpages = {8},
location = {Philadelphia, Pennsylvania, USA},
series = {STOC '96}
}

@incollection{JOHNSON90,
  author    = {Johnson, David S.},
  title     = {A Catalog of Complexity Classes},
  booktitle = {Handbook of Theoretical Computer Science, Volume A: Algorithms and Complexity},
  editor    = {van Leeuwen, Jan},
  publisher = {Elsevier},
  year      = {1990},
  pages     = {67--161},
  chapter   = {2},
  doi       = {10.1016/B978-0-444-88071-0.50007-2}
}

@book{AroraBarak2009,
  author    = {Sanjeev Arora and Boaz Barak},
  title     = {Computational Complexity: A Modern Approach},
  publisher = {Cambridge University Press},
  address   = {Cambridge},
  year      = {2009},
  isbn      = {9780521424264}
}

@book{GareyJohnson1979,
  author    = {Michael R. Garey and David S. Johnson},
  title     = {Computers and Intractability: A Guide to the Theory of NP-Completeness},
  publisher = {W. H. Freeman},
  address   = {San Francisco},
  year      = {1979},
  isbn      = {9780716710455}
}

@book{Goldreich2008,
  author    = {Oded Goldreich},
  title     = {Computational Complexity: A Conceptual Perspective},
  publisher = {Cambridge University Press},
  address   = {Cambridge},
  year      = {2008},
  isbn      = {9780521884730}
}

@article{walgate2000,
  title = {Local Distinguishability of Multipartite Orthogonal Quantum States},
  author = {Walgate, Jonathan and Short, Anthony J. and Hardy, Lucien and Vedral, Vlatko},
  journal = {Phys. Rev. Lett.},
  volume = {85},
  issue = {23},
  pages = {4972--4975},
  numpages = {0},
  year = {2000},
  month = {Dec},
  publisher = {American Physical Society},
  doi = {10.1103/PhysRevLett.85.4972},
  url = {https://link.aps.org/doi/10.1103/PhysRevLett.85.4972}
}

@article{kar01,
  title = {Distinguishability of Bell States},
  author = {Ghosh, Sibasish and Kar, Guruprasad and Roy, Anirban and Sen(De), Aditi and Sen, Ujjwal},
  journal = {Phys. Rev. Lett.},
  volume = {87},
  issue = {27},
  pages = {277902},
  numpages = {2},
  year = {2001},
  month = {Dec},
  publisher = {American Physical Society},
  doi = {10.1103/PhysRevLett.87.277902},
  url = {https://link.aps.org/doi/10.1103/PhysRevLett.87.277902}
}

@article{walgate02,
  title = {Nonlocality, Asymmetry, and Distinguishing Bipartite States},
  author = {Walgate, Jonathan and Hardy, Lucien},
  journal = {Phys. Rev. Lett.},
  volume = {89},
  issue = {14},
  pages = {147901},
  numpages = {4},
  year = {2002},
  month = {Sep},
  publisher = {American Physical Society},
  doi = {10.1103/PhysRevLett.89.147901},
  url = {https://link.aps.org/doi/10.1103/PhysRevLett.89.147901}
}

@article{Kimble2008,
  author    = {H. Jeff Kimble},
  title     = {The Quantum Internet},
  journal   = {Nature},
  volume    = {453},
  pages     = {1023--1030},
  year      = {2008},
  doi       = {10.1038/nature07127}
}

@article{Wehner2018,
  author    = {Stephanie Wehner and David Elkouss and Ronald Hanson},
  title     = {Quantum Internet: A Vision for the Road Ahead},
  journal   = {Science},
  volume    = {362},
  number    = {6412},
  pages     = {eaam9288},
  year      = {2018},
  doi       = {10.1126/science.aam9288}
}

@article{Bell64,
  title = {On the Einstein Podolsky Rosen paradox},
  author = {Bell, J. S.},
  journal = {Physics Physique Fizika},
  volume = {1},
  issue = {3},
  pages = {195--200},
  numpages = {6},
  year = {1964},
  month = {Nov},
  publisher = {American Physical Society},
  doi = {10.1103/PhysicsPhysiqueFizika.1.195},
  url = {https://link.aps.org/doi/10.1103/PhysicsPhysiqueFizika.1.195}
}

@article{CHSH,
  title = {Proposed Experiment to Test Local Hidden-Variable Theories},
  author = {Clauser, John F. and Horne, Michael A. and Shimony, Abner and Holt, Richard A.},
  journal = {Phys. Rev. Lett.},
  volume = {23},
  issue = {15},
  pages = {880--884},
  numpages = {0},
  year = {1969},
  month = {Oct},
  publisher = {American Physical Society},
  doi = {10.1103/PhysRevLett.23.880},
  url = {https://link.aps.org/doi/10.1103/PhysRevLett.23.880}
}

@incollection{GHZ,
  author    = {Greenberger, Daniel M. and Horne, Michael A. and Zeilinger, Anton},
  title     = {Going Beyond Bell's Theorem},
  booktitle = {Bell's Theorem, Quantum Theory and Conceptions of the Universe},
  editor    = {Kafatos, Menas},
  publisher = {Springer Netherlands},
  address   = {Dordrecht},
  pages     = {69--72},
  year      = {1989},
  isbn      = {978-94-017-0849-4},
  doi       = {10.1007/978-94-017-0849-4_10},
  url       = {https://doi.org/10.1007/978-94-017-0849-4_10}
}

@article{Watrous08,
  author    = {Watrous, John},
  title     = {Zero-knowledge against quantum attacks},
  journal   = {Proceedings of the Thirty-Eighth Annual ACM Symposium on Theory of Computing},
  pages     = {296–305},
  year      = {2006},
  doi = {10.1145/1132516.1132560},
  url = {https://doi.org/10.1145/1132516.1132560}
}

@inproceedings{Goldwasser1985Knowledge,
  author    = {Shafi Goldwasser and Silvio Micali and Charles Rackoff},
  title     = {The Knowledge Complexity of Interactive Proof-Systems},
  booktitle = {Proceedings of the Seventeenth Annual ACM Symposium on Theory of Computing (STOC '85)},
  pages     = {291--304},
  year      = {1985},
  publisher = {Association for Computing Machinery},
  address   = {Providence, Rhode Island, USA},
  doi       = {10.1145/22145.22178}
}

@article{Das2025, author = {Srijani Das and Manasi Patra and Tuhin Paul and Anish Majumdar and Ramij Rahaman}, title = {Device-Independent Anonymous Communication in Quantum Networks}, journal = {arXiv preprint}, year = {2025}, doi = {10.48550/arXiv.2512.21047} }

@misc{Yao,
      title={Self testing quantum apparatus}, 
      author={Dominic Mayers and Andrew Yao},
      year={2004},
      eprint={quant-ph/0307205},
      archivePrefix={arXiv},
      primaryClass={quant-ph},
      url={https://arxiv.org/abs/quant-ph/0307205}, 
}

@article{masaness,
  title = {Asymptotic Violation of Bell Inequalities and Distillability},
  author = {Masanes, Llu\'{\i}s},
  journal = {Phys. Rev. Lett.},
  volume = {97},
  issue = {5},
  pages = {050503},
  numpages = {4},
  year = {2006},
  month = {Aug},
  publisher = {American Physical Society},
  doi = {10.1103/PhysRevLett.97.050503},
}

@article{Arnon-Friedman2018,
  author  = {Arnon-Friedman, Rotem and Dupuis, Fr{\'e}d{\'e}ric and Fawzi, Omar and Renner, Renato and Vidick, Thomas},
  title   = {Practical device-independent quantum cryptography via entropy accumulation},
  journal = {Nature Communications},
  volume  = {9},
  number  = {1},
   pages   = {459},
  year    = {2018},
  doi     = {10.1038/s41467-017-02307-4},
  
}

@article{Navascu,
  title = {Bounding the Set of Quantum Correlations},
  author = {Navascu\'es, Miguel and Pironio, Stefano and Ac\'{\i}n, Antonio},
  journal = {Phys. Rev. Lett.},
  volume = {98},
  issue = {1},
  pages = {010401},
  numpages = {4},
  year = {2007},
  month = {Jan},
  publisher = {American Physical Society},
  doi = {10.1103/PhysRevLett.98.010401},
  url = {https://link.aps.org/doi/10.1103/PhysRevLett.98.010401}
}

@article{Tomamichel2009AEP,
  author  = {Tomamichel, Marco and Colbeck, Roger and Renner, Renato},
  title   = {A Fully Quantum Asymptotic Equipartition Property},
  journal = {IEEE Transactions on Information Theory},
  volume  = {55},
  number  = {12},
  pages   = {5840--5847},
  year    = {2009},
  month   = dec,
  doi     = {10.1109/TIT.2009.2032797},
  eprint  = {0811.1221},
  archivePrefix = {arXiv},
  primaryClass  = {quant-ph}
}

\end{document}